\documentclass[aps,pra,twocolumn,superscriptaddress,groupedaddress,nofootinbib]{revtex4}  
\usepackage{graphicx}  
\usepackage{dcolumn}   
\usepackage{bm}        
\usepackage{amssymb}   

\usepackage{hhline}
\usepackage{scalefnt}
\usepackage{amsmath}
\usepackage{color}
\usepackage{xr}
\usepackage{cases}
\usepackage{enumerate}
\usepackage{tikz}
\usetikzlibrary{matrix,arrows}
\usepackage{enumitem}
 \usepackage{relsize}
\def\I {\mathrm{i}}

\usepackage{slashbox}
%
%
%


\usepackage{xy}
\xyoption{matrix}
\xyoption{frame}
\xyoption{arrow}
\xyoption{arc}

\usepackage{ifpdf}
\ifpdf
\else
\PackageWarningNoLine{Qcircuit}{Qcircuit is loading in Postscript mode.  The Xy-pic options ps and dvips will be loaded.  If you wish to use other Postscript drivers for Xy-pic, you must modify the code in Qcircuit.tex}
\xyoption{ps}
\xyoption{dvips}
\fi

\entrymodifiers={!C\entrybox}

\newcommand{\qw}[1][-1]{\ar @{-} [0,#1]}
\newcommand{\qwd}[1][-1]{\ar @{.} [0,#1]} 
\newcommand{\qwx}[1][-1]{\ar @{-} [#1,0]}
\newcommand{\cw}[1][-1]{\ar @{=} [0,#1]}
\newcommand{\cwx}[1][-1]{\ar @{=} [#1,0]}
\newcommand{\gate}[1]{*+<.6em>{#1} \POS ="i","i"+UR;"i"+UL **\dir{-};"i"+DL **\dir{-};"i"+DR **\dir{-};"i"+UR **\dir{-},"i" \qw}   
    \newcommand{\ucc}[1]{*+<.6em>{#1} \POS ="i","i"+UR;"i"+UL **\dir{-};"i"+DL **\dir{-};"i"+DR **\dir{-};"i"+UR **\dir{-},"i" \cw}
\newcommand{\meter}{*=<1.8em,1.4em>{\xy ="j","j"-<.778em,.322em>;{"j"+<.778em,-.322em> \ellipse ur,_{}},"j"-<0em,.4em>;p+<.5em,.9em> **\dir{-},"j"+<2.2em,2.2em>*{},"j"-<2.2em,2.2em>*{} \endxy} \POS ="i","i"+UR;"i"+UL **\dir{-};"i"+DL **\dir{-};"i"+DR **\dir{-};"i"+UR **\dir{-},"i" \qw}





\newcommand{\control}{*!<0em,.025em>-=-<.2em>{\bullet}}

\newcommand{\ctrl}[1]{\control \qwx[#1] \qw}

\newcommand{\targ}{*+<.02em,.02em>{\xy ="i","i"-<.39em,0em>;"i"+<.39em,0em> **\dir{-}, "i"-<0em,.39em>;"i"+<0em,.39em> **\dir{-},"i"*\xycircle<.4em>{} \endxy} \qw}
\newcommand{\ctarg}{*+<.02em,.02em>{\xy ="i","i"-<.39em,0em>;"i"+<.39em,0em> **\dir{-}, "i"-<0em,.39em>;"i"+<0em,.39em> **\dir{-},"i"*\xycircle<.4em>{} \endxy} \cw}

\newcommand{\multigate}[2]{*+<1em,.9em>{\hphantom{#2}} \POS [0,0]="i",[0,0].[#1,0]="e",!C *{#2},"e"+UR;"e"+UL **\dir{-};"e"+DL **\dir{-};"e"+DR **\dir{-};"e"+UR **\dir{-},"i" \qw}
\newcommand{\multigateUp}[3]{*+<1em,.9em>{\hphantom{#2}} \POS [0,0]="i",[0,0].[#3,0] !C *{#2}, \POS [0,0]="i",[0,0].[#1,0]="e", "e"+UR;"e"+UL **\dir{-};"e"+DL **\dir{-};"e"+DR **\dir{-};"e"+UR **\dir{-},"i" \qw} 
\newcommand{\ghost}[1]{*+<1em,.9em>{\hphantom{#1}} \qw}



\newcommand{\lstick}[1]{*!R!<.5em,0em>=<0em>{#1}}


\newcommand{\Qcircuit}{\xymatrix @*=<0em>}



\hyphenation{ALPGEN}
\hyphenation{EVTGEN}
\hyphenation{PYTHIA}
\hyphenation{Wein-furter}
\hyphenation{iso-me-try}
\hyphenation{or-tho-go-nal}
\hyphenation{co-lumn}
\hyphenation{to-po-lo-gy}
\hyphenation{after-wards}
\hyphenation{si-gni-fi-cant}
\hyphenation{ge-ne-ra-te}
\hyphenation{smal-lest}
\hyphenation{tri-vial}

\newcommand{\ket}[1]{\left| #1 \right>} 

\newcommand{\ketbra}[2]{|#1\rangle\!\langle#2|}

\newcommand{\id}{\leavevmode\hbox{\small1\normalsize\kern-.33em1}}
\newcommand{\tr}{\mathrm{tr}}
\newcommand{\nt}{{\sc not}}
\newcommand{\cnot}{{\sc C-not}}
\newcommand{\cnots}{\cnot s}

\newcommand\restr[2]{{
  \left.\kern-\nulldelimiterspace 
  #1 
  \vphantom{\big|} 
  \right|_{#2} 
  }}

\newtheorem{thm}{Theorem}
\newtheorem{lem}[thm]{Lemma}
\newtheorem{cor}[thm]{Corollary}
\newtheorem{rmk}{Remark}

\newenvironment{proof}[1][Proof]{\noindent\textbf{#1.} }{\ \rule{0.5em}{0.5em}}

\begin{document}

\title{Quantum Circuits for Quantum Channels}

\author{Raban~Iten} \email{itenr@itp.phys.ethz.ch} \affiliation{Institute for Theoretical Physics, ETH Z\"urich, 8093 Z\"urich, Switzerland}  
\author{Roger~Colbeck} \email{roger.colbeck@york.ac.uk} \affiliation{Department of Mathematics, University of York, YO10 5DD, UK}               
\author{Matthias~Christandl} \email{christandl@math.ku.dk}
\affiliation{QMATH, Department of Mathematical Sciences, University of
  Copenhagen, Universitetsparken 5, 2100 Copenhagen, Denmark}    

\date{$10^{\mathrm{th}}$ May 2017}

\begin{abstract}
  We study the implementation of quantum channels with quantum
  computers while minimizing the experimental cost, measured in terms
  of the number of Controlled-{\nt} (\cnot{}) gates required (single-qubit gates are free). We consider three different models. In the
  first, the Quantum Circuit Model (QCM), we consider sequences of
  single-qubit and \cnot{} gates and allow qubits to be traced out at
  the end of the gate sequence. In the second (RandomQCM), we also
  allow external classical randomness. In the third (MeasuredQCM) we
  also allow measurements followed by operations that are classically
  controlled on the outcomes. We prove lower bounds on the number of
  \cnot{} gates required and give near-optimal decompositions in
  almost all cases. Our main result is a MeasuredQCM circuit for any
  channel from $m$ qubits to $n$ qubits that uses at most one ancilla
  and has a low \cnot{} count. We give explicit examples for small
  numbers of qubits that provide the lowest known \cnot{} counts.
\end{abstract}

\maketitle

\section{Introduction}
Quantum channels, mathematically described by completely positive,
trace-preserving maps, play an important role in quantum information
theory because they are the most general evolutions quantum systems
can undergo. The ability to experimentally perform an arbitrary
channel enables the simulation of noisy channels. For example,
this is useful to test how a new component (e.g., a receiver) will
perform when subjected to noise in a more controlled
environment. Moreover, experimental groups can show their command over
quantum systems consisting of a small number of qubits by
demonstrating the ability to perform arbitrary quantum channels on
them (see for example~\cite{open_quantum_simulation} and references
therein).  Instead of building a different device for the
implementation of each quantum channel, it is convenient to decompose
arbitrary channels into a sequence of simple-to-perform operations. In
this paper we work with a gate set consisting of \cnot{} and
single-qubit gates, which is universal~\cite{5}. However, note that the main ideas of our circuit constructions and lower bounds generalize to arbitrary gate sets. The implementation of
a \cnot{} gate is usually more prone to errors than the implementation
of single-qubit gates. For example, the lowest achieved infidelities
are by a factor of more than 10 smaller for single-qubit gates than
for two qubit gates~\cite{Ballance_new, Gaebler}. This motivates using
the number of \cnot{} gates to measure the cost of a quantum circuit.\footnote{Certain experimental architectures only allow nearest neighbour \cnot{} gates. Here we assign the same cost to each \cnot{} gate regardless of its interaction distance. However, our circuit constructions for channels are based on decompositions of isometries, which are straightfoward to adapt to the nearest neighbour case~\cite{Iso,2}.}

In this work we consider the construction of universal circuit
topologies comprising gates from this universal set.  A circuit
topology~\cite{unitary_lowerb1, unitary_lowerb2} corresponds to a set
of quantum channels that have a particular structure but in which some
gates may be free or have free parameters. Our aim is to find circuit
topologies that minimize the \cnot{} count but are universal in the
sense that any channel from $m$ to $n$ qubits can be obtained by
choosing the free parameters appropriately.

We work with three different models. In the first we
consider \emph{the quantum circuit model} (QCM), in which we allow a
sequence of \cnot{}, single-qubit gates and partial trace operations
on the qubits and any ancilla.  In the second (RandomQCM) we allow the
use of classical randomness in addition.  In the third (MeasuredQCM),
we allow the operations of the QCM as well as measurements and
operations that are classically controlled on the measurement
outcomes.

A task that is related to the construction of a circuit topology is
that of minimizing the \cnot{} count for a given quantum channel (on a
channel-by-channel basis).  Although this appears quite different, we
show that it is related in the sense that our lower bounds on the
number of \cnot{} gates for circuit topologies that are able to
generate all quantum channels of Kraus rank $K$ are also lower bounds
for almost all (in a mathematical sense) quantum channels of Kraus
rank $K$ individually, where the Kraus rank of a channel is defined as
the smallest number of Kraus operators required to represent the
channel and is equal to the rank of the corresponding Choi
state~\cite{choi}. 

It is worth emphasizing that there is a (measure zero) set of channels
for which our lower bounds do not apply individually, and this set
contains experimentally interesting channels. In other words, there are
circuits of lower cost than those given in this paper if the channel
has a simple or special form. Nevertheless, our constructions could
still be used as a starting point to find a low-cost circuit in such
cases. Further optimizations could then be performed with algorithms
such as, for example, the one given in~\cite{maslov}.

For certain special cases, the theory of decomposing operations is
quite developed. Considerable effort has been made to reduce the
number of \cnot{} gates required in the QCM for general unitary
gates~\cite{Knill, Aho, Vartiainen, unif_rot,10, 2} and state
preparation~\cite{10, 3}, which are both examples of a wider class of
operations, isometries. Recently, it was shown that every isometry
from $m$ to $n$ qubits can be implemented by using about twice the
\cnot{} count required by a lower bound~\cite{Iso}.  This 
leads to a method to implement quantum channels by using Stinespring's
theorem~\cite{Stinespring}, which states that every quantum channel
from $m$ to $n$ qubits can be implemented by an isometry from $m$ to
$m+2n$ qubits, followed by tracing out $m+n$ qubits. The
isometry can be decomposed into single-qubit gates and $4^{m+n}$
\cnots{} to leading order~\cite{Iso}. Working in the quantum circuit
model this \cnot{} count is optimal up to a factor of about four to
leading order~\cite{Iso}. However, one can significantly lower this
\cnot{} count and the required number of ancillas in more general
models.

Quantum operations beyond isometries have been investigated
in~\cite{one_ancilla}. Although~\cite{one_ancilla} did not focus
on a decomposition into elementary gates, combining the
decomposition in~\cite{one_ancilla} with an idea given
in~\cite{binary_search} and with the circuits for isometries given
in~\cite{Iso}, leads to low-cost decompositions of quantum channels
into single-qubit and \cnot{} gates in the MeasuredQCM using only one ancilla qubit. The combination
of~\cite{one_ancilla} and~\cite{binary_search} was fleshed
out in~\cite{Shen}, which appeared shortly after the first
version of the present work.  In~\cite{Shen}, several
applications are also discussed.

In this work, we give a new decomposition and proof that also leads to
near-optimal circuits for quantum channels.  In contrast to the work
mentioned above, we consider channels that map between spaces with
different dimensions. Our decomposition can be used for arbitrary
channels from $m$ to $n$ qubits (if $m=0$ our channels allow the preparation of arbitrary mixed states). In spite of the different proof
technique, the form of decomposition has similarities with the one
based on~\cite{one_ancilla}, which we discuss later.

Previously, the task of minimizing the number of required \cnot{}
gates for the implementation of quantum channels in the MeasuredQCM has been studied in
the case of channels on a single qubit~\cite{Wang_qubit}.  In the
special case of a single-qubit channel, we recover a circuit
topology (consisting of only one \cnot{} gate) similar to that given
in~\cite{Wang_qubit}. We also note that Ans\"atze for decompositions of
arbitrary channels have been considered
in~\cite{Wang_qudit,Wang_new}. One of these, Ansatz~1
in~\cite{Wang_new}, is based on applying the Cosine-Sine decomposition
to the Stinespring dilation isometry of the channel, and hence will
always work~\cite{Iso}. Our results imply that the Ansatz given
in~\cite{Wang_qudit} (which is designed for the RandomQCM) cannot work in
general because it does not have enough parameters.\footnote{Note that
  some of the phrasing in~\cite{Wang_qudit} gives the impression that
  this Ansatz is proven to work in all cases; however the authors
  confirmed that this is not intended.} Further Ans\"atze are given
in~\cite{Wang_new}, but it is not proven whether these work. In
contrast, our constructive decompositions are proven to always work.

\bigskip

In the following we describe how to construct circuit topologies for
quantum channels in the two aforementioned generalizations of the
quantum circuit model. Our asymptotic results are summarized in
Table~\ref{Table1}. First, we show that in the QCM with additional
classical randomness (for free) the number of required ancillas can be
reduced to $m$ and the \cnot{} count to $2^{2m+n}$ to leading
order. Moreover, we derive a lower bound in this model, which shows
that $m$ ancillas are necessary and that our \cnot{} count is optimal
up to a factor of about two to leading order.

Second, we show that the MeasuredQCM offers further improvement. Our
main result is a decomposition scheme for arbitrary $m$ to $n$
channels, which leads to the lowest known \cnot{} count of
$m\cdot2^{2m+1}+2^{m+n}$ if $m<n$ and of $n\cdot2^{2m+1}$ if
$m\geqslant n$ (to leading order). Moreover, our construction shows
that we can implement $m$ to $n\leqslant m$ channels using only $m+1$
qubits (i.e., one ancilla), and $m$ to $n>m$ channels using $n$ qubits
(which is clearly minimal, because the output of the channel is an
$n$-qubit state). 

Our construction also leads to
low-cost implementations of $m$ to $n$ channels for small $m$ and
$n$  (as does the construction resulting from the combination of~\cite{one_ancilla},~\cite{binary_search} and~\cite{Iso}). We give the explicit MeasuredQCM topologies for $m$ to $n$
channels for $1\leqslant m,n\leqslant 2$ in
Appendix~\ref{app:small_cases}. These circuits are most likely to be
of practical relevance for experiments performed in the near
future. In particular, they show that every one to two channel can be
implemented with 4 \cnot{} gates, every two to one channel with 7 and
every two to two channel with 13. These counts are lower than those
achieved by working in the QCM or the RandomQCM. For example, the best
known implementation of a two to two channel in the quantum circuit
model requires about 580 \cnots{}.\footnote{This is an an upper bound
  based on the Column-by-Column Decomposition for
  isometries~\cite{Iso}.} Allowing classical randomness reduces this
count to 54 \cnots{},\footnote{This corresponds to the \cnot{} count
  for a two to four isometry~\cite{Iso}.} which is over four times our
\cnot{} count of
13 when measurement and classical control are also allowed.\\

In future work, it would be interesting to generalize our circuit
constructions for other universal gate sets. This could be achieved by
finding circuits for isometries and then applying our construction for
channels in the MeasuredQCM described in Section~\ref{sec:UB_MCC},
which works independently of the chosen gate set. The ultimate goal
would be to design an algorithm that takes as input a given set of
gates, a noise model, an accuracy tolerance and a desired operation,
and that gives as output a circuit composed of gates from the set that would
approximate the desired operation to within the accuracy tolerance (if
this is possible), with the number of gates in the circuit being close
to minimal. Note that the constructions introduced in this paper could
be used as a subroutine in a version of this algorithm, and could
serve as a starting point to which further optimizations (to remove
gates where possible) are applied.

\begin{table}[!t] 
\renewcommand{\arraystretch}{1.4}
\caption{Asymptotic upper and lower bounds on the number of \cnot{}
  gates for $m$ to $n$ channels in the three different models (Model~1: QCM,
  Model~2: RandomQCM, and Model~3: MeasuredQCM). The total number of qubits required for the constructions is also indicated.}
\label{Table1}
\centering
\begin{ruledtabular}
\begin{tabular}{llll}
Model&Lower bound&Upper bound&Qubits	\\ \hline 
1~\cite{Iso} &$\frac{1}{4} 4^{m+n}$&$4^{m+n}$&$m+2n$\\
2&$\frac{1}{2}2^{2m+n}$&$2^{2m+n}$&$m+n$\\
3 $(m<n)$&$\frac{1}{6}\left(2^{m+n+1}-2^{2m}\right)$&$m\cdot2^{2m+1}+2^{m+n}$&$n$\\
3 $(m\geqslant n)$&$\frac{1}{6}2^{2n}$&$n\cdot 2^{2m+1}$&$m+1$\\ [0.04cm]										   
\end{tabular}
\end{ruledtabular}
\end{table}

\section{Decomposition allowing classical randomness}
In the following, we consider the implementation of quantum channels in the RandomQCM, i.e., we allow classical randomness for free. Since the set of all quantum channels
from $m$ to $n$ qubits is convex, every $m$ to $n$ channel
$\mathcal{E}$ can be decomposed into a (finite) convex combination of
extreme $m$ to $n$ channels $\mathcal{E}_j$.\footnote{For a bound on the number of channels required see~\cite{MB}.}  

Physically this means that, allowing classical randomness, the channel
$\mathcal{E}=\sum_{j=1}^J p_j \mathcal{E}_j$ can be implemented by
performing the channel $\mathcal{E}_j$ with probability $p_j$ (and
forgetting about the outcome $j$).

\subsection{Upper bound}
By Remark~6 of~\cite{choi},\footnote{\label{fm_choi}In particular,
  Theorem~5 (and Remark~6) of~\cite{choi} characterizes the extreme
  points of the set of all completely positive, unital maps.  But the
  theorem (and the remark) can be adapted to trace preserving
  (completely positive) maps by considering the adjoint map
  (see~\cite{Friedland} or~\cite{MB} for more details).} every extreme
channel from $m$ to $n$ qubits has Kraus rank at most
$2^m$. Stinespring's theorem~\cite{Stinespring} then implies that in
order to implement every extreme channel it suffices to be able to
implement arbitrary isometries from $m$ to $m+n$ qubits.
Decompositions of such isometries use $2^{2m+n}$ \cnot{} gates to
leading order~\cite{Iso}. In the following section, we derive a lower
bound on the number of \cnot{} gates and ancilla qubits required for
$m$ to $n$ channels allowing classical randomness, which shows that
the \cnot{} count stated above is optimal up to a factor of two in
leading order and optimal in the number of required ancillas.

\subsection{Lower bound}\label{sec:CR_LB}
Because extreme channels cannot be decomposed into a convex
combination of other channels, classical randomness cannot help
implement them. Hence, a lower bound for extreme channels in the QCM
is also a lower bound for channels in the RandomQCM. Since the set of
extreme channels of Kraus rank $2^m$ is nonempty~\cite{Friedland}, at
least $m$ ancillas are required (using fewer ancillas, we could only
generate channels of smaller Kraus rank).\footnote{By a similar
  argument, one can see that the implementation of channels in the
  quantum circuit model requires $m+n$ qubits.} To find a lower bound
on the number of \cnot{} gates required for a quantum circuit topology
for $m$ to $n$ extreme channels, we can use a parameter counting
argument, similar to the argument used to derive a lower bound for
unitaries~\cite{unitary_lowerb1, unitary_lowerb2} or for channels in
the quantum circuit model~\cite{Iso}.

First, we count the number of (real) parameters required to describe
the set of all extreme channels.\footnote{A rigorous mathematical
  approach of the parameter counting (using the dimension of a
  differentiable manifold) can be found in~\cite{MB} and confirms the
  naive count described here. Note that a parameter count derived within the framework of semi-algebraic geometry was first given
  in~\cite{Friedland}.} Every quantum channel $\mathcal{E}$ from $m$
to $n$ qubits with Kraus rank $K$ can be represented by Kraus
operators
$A_i \in \textnormal{Mat}_{\mathbb{C}}\left(2^n \times 2^m\right)$
such that $\sum_{i=1}^{K} A_i^{\dagger}A_i=I$ and
$\mathcal{E}(X)=\sum_{i=1}^K A_i X A_i^{\dagger}$ [for all
$X \in \textnormal{Mat}_{\mathbb{C}}\left(2^m \times
  2^m\right)$]~\cite{choi}.
By Theorem 5 of~\cite{choi},$^{\ref{fm_choi}}$ a channel $\mathcal{E}$
is extreme if and only if all elements of the set
$\{A_i^{\dagger}A_j \}_{i,j \in \{1,2,\dots,K\}}$ are linearly
independent. Each $m$ to $n$ channel $\mathcal{E}$ of Kraus rank
$K=2^m$ can be described by $K$ $2^n \times 2^m$ (complex) matrices
$A_i$, which satisfy $4^{m}$ independent (note that the matrix
$\sum_{i=1}^{K} A_i^{\dagger}A_i$ is Hermitian) conditions (over
$\mathbb{R}$). However, the Kraus representation is not unique. Two
sets of Kraus operators $\{A_i\}_{i \in \{1,2,\dots,K\}}$ and
$\{B_i\}_{i \in \{1,2,\dots,K\}}$ describe the same channel if and
only if there exists a unitary $U \in U(2^m)$, such that
$A_i=\sum_{j=1}^K (U)_{i,j}B_j$~\cite{choi}. Since a $2^m \times 2^m$
unitary matrix is described by $4^m$ parameters, we conclude that the
set of all extreme channels form $m$ to $n$ qubits is described by
$2^{2m+n+1}-2^{2m+1}$ parameters. Note that the condition that the elements in
$\{A_i^{\dagger}A_j \}_{i,j \in \{1,2,\dots,K\}}$ must be linearly
independent is an open condition for $K=2^m$ and can therefore be
ignored for the parameter counting.

A quantum circuit topology for extreme $m$ to $n$ channels must
therefore introduce at least $2^{2m+n+1}-2^{2m+1}$ parameters. Since
\cnot{} gates cannot introduce parameters into a circuit topology, all
the parameters have to be introduced by the single-qubit gates. We
work with the following single-qubit rotation
gates\begin{eqnarray} \label{eq4}
  R_{x}(\theta)&=&\left(\begin{array}{cc} \cos [ \theta/2 ] &- \I \sin
      [\theta/2] \\ - \I \sin [\theta/2] & \cos
                                           [\theta/2]  \end{array}\right);\\
\label{eq5}
	R_{y}(\theta)&=&\left(\begin{array}{cc} \cos [\theta/2] & -\sin [\theta/2] \\   \sin [\theta/2] & \cos [\theta/2]  \end{array}\right);\\
\label{eq6}
	R_{z}(\theta)&=&\left(\begin{array}{cc} e^{-\I \theta/2 } & 0
            \\   0 & e^{\I \theta/2} \end{array}\right)\, .
\end{eqnarray}

For every unitary operation $U \in U(2)$ acting on a single qubit, there exist real numbers $\alpha,\beta,\gamma$ and $\delta$ such that

\begin{equation} \label{eq_single_qubit_unitary} 
	U=e^{\I\alpha}R_z(\beta)R_y(\gamma)R_z(\delta).
\end{equation}

A proof of this decomposition can be found in~\cite{Buch}. Note that (by symmetry) equation~\eqref{eq_single_qubit_unitary} holds for any two orthogonal rotation axes. The statement above can be represented as a circuit equivalence as follows.
\[
\Qcircuit @C=1.0em @R=.46em {
  &\gate{U}&\qw&=&&\gate{R_z}&\gate{R_y}&\gate{R_z}&\qw }
\]
The wire represents a qubit and the time flows from left to right. We
ignore the global phase shift, because it is physically undetectable.

Let us consider $l$ qubits, $l-m$ of which start in a fixed (not necessarily product) state. We can act with a single-qubit gate on
each qubit at the beginning of the quantum circuit topology (introducing
$3l$ parameters). To introduce further (independent) parameters, we
have to introduce \cnot{} gates. Naively, one would expect that every
\cnot{} gate can introduce six new parameters by introducing a
single-qubit gate after the control- and one after the action-part of
it. But by the following commutation relation:
\[
\Qcircuit @C=0.6em @R=.46em {
&\gate{R_z}&\gate{R_y}&\gate{R_z} & \ctrl{2} &\qw&&& &\gate{R_z}&\gate{R_y}& \ctrl{2} &\gate{R_z}&\qw   \\
& &&&&&&	 =&& \\
&\gate{R_x}&\gate{R_y}&\gate{R_x} &\targ &\qw& &&  &\gate{R_x} &\gate{R_y}&\targ&\gate{R_x}&\qw	 \\
}
\]

each \cnot{} gate can introduce at most four  parameters. Since we
trace out $l-n$ qubits at the end of the circuit, the single-qubit
gates on these qubits can not introduce any parameters into the
circuit topology [which removes $3(l-n)$ parameters]. We conclude
that by using $r$ \cnot{} gates we can introduce at most $4r+3n$
parameters into the circuit topology. By the parameter count above, we
require $4r+3n\geqslant 2^{2m+n+1}-2^{2m+1}$ or equivalently
$r\geqslant 2^{2m-1}\left(2^n-1\right)-\frac{3}{4}n$ for a quantum
circuit topology that is able to perform arbitrary extreme channels
from $m$ to $n$ qubits.

\begin{rmk} [Lower bound for nonexact circuits]
The derived lower bound can be strengthened and made more general
(see~\cite{MB}):  
the set of all quantum circuit topologies that
have fewer than $\lceil 2^{2m-1}\left(2^n-1\right)-\frac{3}{4}n
\rceil$ \cnot{} gates, together,\footnote{By combinatorial arguments,
  there are only finitely many different quantum circuit topologies
  consisting of a fixed number of \cnot{} gates (w.l.o.g.\ we can
  consider circuit topologies in which we perform single-qubit gates
  on all qubits at the start of the circuit and two after each \cnot{}
  gate).} are not able to approximate every $m$ to $n$ extreme channel
arbitrarily well. In fact, they can only generate a closed set of
measure zero\footnote{Nevertheless, many interesting operations lie in
  this set. This is similar to the case of isometries, where, for example, the
  operation required to implement Shor's algorithm~\cite{shor} lies in
  (the analog of) this set~\cite{Iso}.} in the smooth manifold of
$m$ to $n$ extreme channels of Kraus rank $2^m$. Therefore, the lower
bound holds for almost every $m$ to $n$ extreme channel of Kraus rank
$2^m$ individually. 
\end{rmk}

\section{Decomposition allowing measurement and classical control} 
We now move to the consideration of quantum circuit topologies in the
MeasuredQCM where we allow measurements (of single qubits in the
computational basis) and classical control on the measurement results
(and an arbitrary number of ancillas).  This generalizes the model
used above, since classical randomness can be generated by preparing
ancilla qubits in a certain state (by acting with single-qubit
unitaries on them), performing measurements and then controlling the
parameters of a circuit topology on the measurement results. In the
following section we describe how to construct circuit topologies for
arbitrary $m$ to $n$ channels of Kraus rank $K$. Applying this
to extreme channels (which have Kraus rank at most $2^m$) leads
to the \cnot{} counts given in Table~\ref{Table1}. A similar result could be found by using the decomposition described in~\cite{one_ancilla} using binary search~\cite{binary_search}.

\subsection{Upper bound} \label{sec:UB_MCC}
Let $\mathcal{E}$ be a channel from $m$ to $n$ qubits with Kraus rank $K=2^k$ and Kraus operators $\{A_i\}_{i \in \{1,2,\dots,K\}}$, $A_i \in \textnormal{Mat}_{\mathbb{C}}\left(2^n \times 2^m\right)$. We define the matrix $V=[A_1;A_2;\dots;A_K] \in \textnormal{Mat}_{\mathbb{C}}\left(2^{n+k}\times 2^m\right)$, by stacking the Kraus operators.\footnote{For example, we have $[A_1;A_2]:=\left(\begin{array}{c}A_1 \\ A_2\end{array}\right)$.} Since $V^{\dagger}V=\sum_{i=1}^{K} A_i^{\dagger}A_i=I$, we can consider the matrix $V$ as an isometry from $m$ to $n+k$ qubits (which corresponds to a Stinespring dilation of the channel $\mathcal{E}$).
If $n+k=m$ or $k=0$, we can perform $\mathcal{E}$ by implementing $V$
and tracing out $k$ qubits afterwards. In all other
cases,\footnote{Note that for all channels from $m$ to $n$ qubits of
  Kraus rank $K=2^k$ we have that $n+k\geqslant m$ (cf. Lemma 6
  of~\cite{Friedland}).} we consider each half of the matrix $V$
separately and define $B_0=[A_1;A_2;\dots;A_{K/2}]$ and
$B_1=[A_{K/2+1};A_{K/2+2};\dots,A_{K}]$. By the QR-Decomposition (for
rectangular matrices), we can find unitary matrices $Q_0,Q_1 \in
U(2^{n+k-1})$ and $R_0,R_1 \in \{[T;0;\dots;0] \in
\textnormal{Mat}_{\mathbb{C}}\left({2^{n+k-1} \times 2^m}\right): T
\in \textnormal{Mat}_{\mathbb{C}}\left({2^m \times 2^m}\right)$ is
upper triangular$\}$, such that $Q_0R_0=B_0$ and $Q_1R_1=B_1$. Note
that $Q_0$ and $Q_1$ are not unique: indeed only the first $2^m$
columns are determined and the others are free (up to
orthonormality). We can therefore consider $Q_0$ and $Q_1$ as
isometries from $m$ to $n+k-1$ qubits. Summarized, we have $(Q_0\oplus
Q_1)[R_0;R_1]=V$ and hence, $R:=[R_0;R_1]=(Q_0\oplus Q_1)^{\dagger}V$
is an isometry. We can represent this decomposition as an equivalence
of circuit topologies on $n+k$ qubits, where the first $n+k-m$ start
in the state $\ket{0}$ via
\[
\Qcircuit @C=0.45em @R=.46em {			
\lstick{\ket{0}}&\qw&\qw& \multigate{4}{V}&\qw	&&&&&&&&&			\lstick{\ket{0}}&\qw& \multigate{4}{R}&\qw	&\gate{}  \qwx[1] &\qw  \\
\lstick{\ket{0}}&\qw&\qw& \ghost{V} &\qw	&&&&&&&&&			\lstick{\ket{0}}&\qw&	\qwd&	\qwd& \multigate{3}{V'} &\qw  \\
&&\vdots&	&			&&&=&&&&&&&	\vdots	&&&&			\\
\lstick{\ket{0}}&\qw&\qw& \ghost{V}&\qw	&&&&	&&&&&			\lstick{\ket{0}}&\qw& \qwd &\qwd	&\ghost{V'} &\qw \\
\lstick{m}& {\backslash} \qw	&\qw&\ghost{V} &\qw	&&&&&&&&&	\lstick{m}			&{\backslash}\qw  & \ghost{R}	&\qw& \ghost{V'}&\qw 
}
\]
where the backslash stands for a data bus of several (in this case $m$) qubits and $V'=\{Q_0,Q_1\}$ is a placeholder for two isometries in  $\textnormal{Mat}_{\mathbb{C}}\left({2^{n+k-1} \times 2^m}\right)$. The unfilled square denotes a uniform control.\footnote{The notation of ``uniform control'' was introduced in~\cite{unif_rot}. Some authors also call these gates ``multiplexed'' (for example, see~\cite{2}).} In the case above, we implement $Q_0$ if the most significant qubit is in the state $\ket{0}$ and $Q_1$ if it is in the state $\ket{1}$. Note that the gate $R$ only acts nontrivially on the most significant and the $m$ least significant qubits. In particular, the second to $(n+k-m)$th qubits are still in the state $\ket{0}$ after applying $R$  (the lack of action of a gate on a particular qubit is indicated by use of a dotted line for that qubit).  We can apply the same procedure to the isometries $Q_0$ and $Q_1$. We repeat this $\tilde{k}$ times, until we end up with a quantum circuit topology of the form
\[
\Qcircuit @C=0.45em @R=1.2em {			
\lstick{\ket{0}}&\qw& \multigate{7}{V}&\qw	&&&&&&&&&			\lstick{\ket{0}}&\qw& \multigate{7}{R^1}&\qw	&\gate{}  \qwx[1] &\qw&&\hdots&&&\gate{}  \qwx[1] &\qw&\gate{}  \qwx[1]&\qw \\
\lstick{\ket{0}}&\qw& \ghost{V}&\qw	&&&&&&&&&			\lstick{\ket{0}}&\qw& \qwd&\qwd	&  \multigate{6}{R^2}  &\qw &&\hdots&&&\gate{}  \qwx[1]&\qw&\gate{}  \qwx[1]&\qw \\
&&	&			&&&&&&&&&&		&&&&&	&	&&&&	&		\\
\vdots&&	&			&&&=&&&&&&\vdots&&		&&&&&		&&&	\vdots&&	\vdots	\\
&&	&			&&&&&&&&&&		&&&&&	&&	&&&&	&		\\
\lstick{\ket{0}}&\qw& \ghost{V}&\qw	&&&&	&&&&&			\lstick{\ket{0}}&\qw&  \qwd&\qwd& \qwd &\qwd&&\hdots&&& \multigateUp{2}{R^{\tilde{k}}}{1} \qwx[-1]&\qw&\gate{}  \qwx[1] \qwx[-1]&\qw \\
\lstick{\ket{0}^{\otimes l}}&{\backslash}\qw& \ghost{V} &\qw	&&&&&&&&&			\lstick{\ket{0}^{\otimes l}}	&{\backslash}\qw& \qwd	&\qwd	& \qwd	&\qwd&&\hdots&&& \qwd	&\qwd&\multigate{1} {\tilde{V}}&\qw  \\
\lstick{m}&{\backslash}\qw&\ghost{V} &\qw	&&&&&&&&&			\lstick{m}	&{\backslash}\qw& \ghost{R^1}&\qw	& \ghost{R^2}&\qw&&\hdots&&& \ghost{R^{\tilde{k}}}&\qw& \ghost{\tilde{V}}&\qw  \\
}
\]
where $\tilde{V} \in \textnormal{Mat}_{\mathbb{C}}\left({2^{m+l}
    \times 2^m}\right)$ is an isometry and where each gate $R^i$ acts nontrivially only on the  $i$th and the $m$ least significant qubits. If $m<n$, we set $l=n-m$ and $\tilde{k}=k$ and if
$m\geqslant n$, we set $l=1$ and $\tilde{k}=n+k-m-1$. Recall that we
can implement the channel $\mathcal{E}$ by applying the isometry $V$
and tracing out the first $k$ qubits afterwards (which we can think
about as performing measurements on them and forgetting the
result). Since measurements commute with controls, we conclude that
the following MeasuredQCM topology is able to perform all channels from $m$ to $n$ qubits of Kraus rank at most $K$
\[
\Qcircuit @C=0.45em @R=1em {
\lstick{\ket{0}}&\qw& \multigate{7}{R^1}&\meter	&\ucc{}  \cwx[1] &\cw&\cw&&\hdots&&&\ucc{}  \cwx[1] &\cw&\ucc{}  \cwx[1]&\\
\lstick{\ket{0}}&\qw& \qwd&\qwd	& \multigate{6}{R^2}&\meter &\cw &&\hdots&&&\ucc{}  \cwx[1]&\cw&\ucc{}  \cwx[1]& \\
&&&&		&&&	&	&&&&&&&&&&	\\
&\vdots&&&		&&&	&	&&&\vdots&&\vdots&&&&&	\\
&&&&		&&&	&	&&&&&&&&&&	\\
\lstick{\ket{0}}&\qw& \qwd&\qwd	&\qwd&\qwd&\qwd&&\hdots&&& \multigateUp{2}{R^{\tilde{k}}}{1} \cwx[-1]&\meter&\ucc{}  \cwx[1]  \cwx[-1]& \\
\lstick{\ket{0}^{\otimes l}}	&{\backslash}\qw& \qwd&\qwd	&\qwd&\qwd&\qwd&&\hdots&&& \qwd&\qwd&\multigate{1}{\tilde{V}}  &\qw \\
\lstick{m}	&{\backslash}\qw& \ghost{R^1}&\qw	& \ghost{R^2}&\qw&\qw&&\hdots&&& \ghost{R^{\tilde{k}}}&\qw& \ghost{\tilde{V}}  &\qw \\
}
\]
where we also measure the first $k-\tilde{k}$ of the $l+m$ least
significant qubits. Note that the circuit above can be implemented
with only one ancilla qubit by resetting it to the state $\ket{0}$
after the measurements and saving the measurement outputs in classical
registers

\begin{equation}\label{circ}
\Qcircuit @C=0.35em @R=0.6em {
&&&&\lstick{0}&\cw& \cw&\ctarg	 \cw \cwx[8]&\control	 \cw \cwx[8]&\ucc{}  \cwx[8]	&\cw	&\cw&\cw&&\hdots&&&\ucc{}  \cwx[1] &\cw&\cw&\ucc{}  \cwx[1]&\\
&&&&\lstick{0}&\cw&\cw&\cw& \cw&\cw &\ctarg&\control	 \cw \cwx[7]&\cw &&\hdots&&&\ucc{}  \cwx[1]&\cw&\cw&\ucc{}  \cwx[1]& \\
&&&&&&&&&&		&&&		&&&&&&&&&&&&	\\
&&&&\vdots&&&&&&		&&	&&&&&\vdots&&&\vdots&&&&&	\\
&&&&&&&&&&		&&&		&&&&&&&&&&&&	\\
&&&&&&&&&&		&&&		&&&&&&&&&&&&	\\
&&&&\lstick{0}&\cw&\cw& \cw	&\cw &\cw&\cw&\cw&\cw &&\hdots&&&\ucc{}  \cwx[2] \cwx[-1]&\cw&\cw&\ucc{} \cwx[-1] \cwx[1]& \\
&&&&\lstick{0}&\cw&\cw& \cw	&\cw &\cw&\cw&\cw&\cw &&\hdots&&&\cw&\ctarg&\control\cw\cwx[1]&\ucc{}  \cwx[1]& \\
&&&&\lstick{\ket{0}}&\qw& \multigateUp{2}{R^1}{1}&\meter
	&\targ& \multigateUp{2}{R^2}{1} &\meter
\cwx[-7]&\targ&\qw
&&\hdots&&& \multigateUp{2}{R^{\tilde{k}}}{1} \cwx[-1]&\meter \cwx[-1]&\targ&\multigate{2}{\tilde{V}}    \cwx[-1]&\qw  \\
&&&&\lstick{\ket{0}^{\otimes
    {l-1}}}&{\backslash}\qw&\qwd&\qwd&\qwd&\qwd&\qwd&\qwd&\qwd
&&\hdots&&&\qwd&\qwd&\qwd&\ghost{\tilde{V}}&\qw \\
&&&&\lstick{m}	&{\backslash}\qw& \ghost{R^1}&\qw&\qw	&
\ghost{R^2}&\qw&\qw&\qw
&&\hdots&&& \ghost{R^{\tilde{k}}}&\qw&\qw& \ghost{\tilde{V}}  &\qw \\
}
\end{equation}
where the second symbol means that a \nt{} is performed on the first
classical register if the output of the first measurement is one.

The construction above can be implemented on a system consisting of
$l+m$ qubits. The number of \cnot{} gates $N(m,n,k)$ required for the
MeasuredQCM topology above is $\tilde{k}N_{\textnormal{Iso}}(m,m+1)+N_{\textnormal{Iso}}(m,m+l)$. Working out
the different cases, we conclude that the number of \cnot{} gates
required for a quantum channel from $m$ to $n$ qubits of Kraus rank
$2^k$ is $N(m,n,k)=N_{\textnormal{Iso}}(m,n)$ if $k=0$,
$N(m,n,k)=N_{\textnormal{Iso}}(m,m)$ if $n+k=m$, and otherwise
\begin{subnumcases}{N(m,n,k)\leqslant} 
kN_{\textnormal{Iso}}(m,m+1)+N_{\textnormal{Iso}}(m,n) \text{ if } m<n \nonumber\\
(k+n-m)N_{\textnormal{Iso}}(m,m+1)    \text{ if } m \geqslant n ,\nonumber
\end{subnumcases}
where $N_{\textnormal{Iso}}(m,n)$ denotes the number of \cnot{} gates
required for an $m$ to $n$ isometry. If $n$ is large, we have
$N_{\textnormal{Iso}}(m,n) \simeq 2^{m+n}$ (for a more precise count,
see~\cite{Iso}). Note that the gates $R^i$ are isometries of a special
form, which could in principle be implemented by using fewer \cnots{}
than an arbitrary isometry. For simplicity, we have not accounted for
this in our \cnot{} counts.  The structure of the gates $R^i$ could be
significant when comparing our decomposition to that
of~\cite{one_ancilla}, which has a similar form to~\eqref{circ} but
where the isometries $R^i$ are general (rather than upper
triangular).\footnote{On a technical level, the reason for the lack of
  structure corresponds to the use of the polar decomposition rather
  than the QR-decomposition.}

Note that the main idea behind our construction and the requirement of
at most one ancilla is general: any decomposition scheme for
isometries (including with other universal gate sets; see,
e.g.,~\cite{OtherUniversalGateSet}) can be applied to
$R^1,R^2,\dots,R^{\tilde{k}}$ and $\tilde{V}$ arising in the
decomposition.

\subsection{Lower bound}
We expect that allowing measurement and classical controls cannot help when implementing isometries. Since isometries are special cases of channels, we expect further that a MeasuredQCM topology for $m$ to $n$ channels requires $\Omega \left(2^{m+n} \right)$ \cnot{} gates if $m<n$ and  $\Omega \left(4^{n} \right)$ \cnot{} gates if $m>n$~\cite{Iso}. Since the proof of this fact is quite technical and uses similar arguments as used to derive the lower bound for extreme channels above, we defer it to Appendix~\ref{app:LB}.  The result is summarized in Table~\ref{Table1}. Note that the lower bound for the case where $m>n$ is quite weak and it would be interesting to improve it in future work.

\section{Acknowledgements}
We acknowledge financial support from the European Research Council (ERC Grant Agreement no 337603), the Danish Council for Independent Research (Sapere Aude),  the Innovation Fund Denmark via the Qubiz project and VILLUM FONDEN via the QMATH Centre of Excellence (Grant No. 10059). R.C.\ acknowledges support from the EPSRC's Quantum Communications Hub. The authors are grateful to Bryan Eastin and Steven T. Flammia,
whose package {\sf Qcircuit.tex} was used to produce the circuit diagrams.
We thank Jonathan Home for helpful discussions.

\appendix

\section{Circuits for $m$ to $n$ channels for $1\leqslant m,n \leqslant2$} \label{app:small_cases}

The decomposition scheme in the MeasuredQCM described in Section~\ref{sec:UB_MCC} also leads to low-cost circuits for extreme $m$ to $n$
channels for small $m$ and $n$. In the following, we demonstrate how
to find circuits for $m$ to $n$ channels in the cases where
$1\leqslant m,n\leqslant 2$.  

\noindent\emph{1 to 1 channels}|An extreme channel from one to one
qubit (which is of Kraus rank at most two) can be implemented by
performing a one to two isometry followed by tracing out the first
qubit.  We can use the circuit topology for one to two isometries from
Appendix~B1 of~\cite{Iso}:
\[
\Qcircuit @C=1.0em @R=.46em {
&\lstick{\ket{0}}&\gate{U}&\ctrl{2} &\gate{R_y}&\ctrl{2} &\gate{U}&\qw\\
                          &&             &&\\
&& \gate{U}&\targ&\gate{R_y} &\targ&\gate{U}&\qw  \\
}
\]
Noting that a unitary before a partial trace can be removed, and that
controls commute with measurements, we obtain the following circuit
for a one to one channel:
\[
\Qcircuit @C=1.0em @R=.46em {
&\lstick{\ket{0}}&\gate{U}&\ctrl{2} &\gate{R_y}&\meter &\control \cwx[2]\cw&&\\
                     &&             &&&\\
&& \gate{U}&\targ&\gate{R_y}&\qw &\targ&\gate{U}&\qw   \\
}
\]
Therefore, any single-qubit channel can be implemented with one
\cnot{} gate. A similar circuit topology was first derived
in~\cite{Wang_qubit}. \bigskip

\noindent\emph{1 to 2 channels}|We do the decomposition exactly as described in the general case in~Section~\ref{sec:UB_MCC}. This leads to a circuit topology of the form 
\[
\Qcircuit @C=0.45em @R=.46em {			
\lstick{\ket{0}}&\qw&\qw& \multigate{2}{V}&\meter 	&&&&&&&&&			\lstick{\ket{0}}&\qw& \multigate{2}{R}&\meter &\ucc{}\cwx[1]\cw&\cw  \\
\lstick{\ket{0}}&\qw&\qw& \ghost{V} &\qw	&&&=&&&&&&			\lstick{\ket{0}}&\qw&	\qwd&	\qwd& \multigate{1}{V'} &\qw  \\
& \qw	&\qw&\ghost{V} &\qw	&&&&&&&&&			& & \ghost{R}	&\qw& \ghost{V'}&\qw 
}
\]
where $V$ is a $1$ to $3$ isometry corresponding to a Stinespring dilation of the implemented channel and $R$ and $V'$ denote $1$ to $2$ isometries. We use the circuit topology for one to two isometries given in Appendix~B1 of~\cite{Iso} (consisting of two \cnot{} gates). Therefore, an extreme channel from one to two qubits (of Kraus rank at most two) requires $2\cdot N_{\textnormal{Iso}}(1,2)=4$ \cnot{} gates.\bigskip

\noindent\emph{2 to 1 channels}| A channel from two qubits to one qubit of Kraus rank at most four can be implemented by an isometry from two to three qubits and tracing out the first two qubits afterwards. We do the first few steps of the decomposition of an two to three isometry as in Appendix~B2b of~\cite{Iso}. This leads to the circuit topology 
\[
\Qcircuit @C=0.5em @R=.2em {
\lstick{\ket{0}}	& \qw   &\gate{R_{y}}  \qw             &\targ	 &\gate{R_{y}}  \qw	&\gate{} \qwx[1] \qw &   \qw &   \meter  	\\
&\multigate{1}{A_0} &\gate{} \qwx[-1] \qw	&  \qw &\gate{} \qwx[-1] \qw & \multigate{1}{\tilde{B}}	 &  \qw&   \meter     \\
 	&\ghost{A_0}		&  \qw  &\ctrl{-2}			&	  \qw &\ghost{\tilde{B}} &	  \qw	&	  \qw&	   \\
}
\]
where $A_0$ and $\tilde{B}$ are two qubit unitaries. We can use a technical trick introduced in Appendix B of~\cite{2} to save one \cnot{} gate: by Theorem~14 of~\cite{2}, we can decompose the gate $A_0$ into a part  (which we denote by $\hat{A_0}$) consisting of two \cnot{} gates (and single-qubit gates) and a diagonal gate $\Lambda$. 
\[
\Qcircuit @C=0.5em @R=.2em {
\lstick{\ket{0}}	& \qw & \qw   &\gate{R_{y}}  \qw             &\targ	 &\gate{R_{y}}  \qw	&\gate{} \qwx[1] \qw &   \qw &   \meter 	\\
&\multigate{1}{\hat{A_0}}	&\multigate{1}{\Lambda} &\gate{} \qwx[-1] \qw	&  \qw &\gate{} \qwx[-1] \qw & \multigate{1}{\tilde{B}}	 &  \qw&   \meter  \\
 	&\ghost{\hat{A_0}}&\ghost{\Lambda	}&  \qw  &\ctrl{-2}			&	  \qw &\ghost{\tilde{B}} &	  \qw	&	  \qw\\
}
\]
Note that we reversed the gate order of the circuit given in Theorem~14 of~\cite{2}, such that the diagonal gate is performed after the gate $\hat{A_0}$. We commute the diagonal gate $\Lambda$ to the right and merge it with the gate $\tilde{B}$ (and call the merged gate $\hat{B}$). Therefore, and since controls commute with measurements, the circuit topology given above is equivalent to the following.
\[
\Qcircuit @C=0.5em @R=.2em {
\lstick{\ket{0}}	& \qw   &\gate{R_{y}}  \qw             &\targ	 &\gate{R_{y}}  \qw &   \meter	&\ucc{} \cwx[1] &  \\
&\multigate{1}{\hat{A_0}} &\gate{} \qwx[-1] \qw	&  \qw &\gate{} \qwx[-1] \qw &  \qw& \multigate{1}{\hat{B}}	 &   \meter  \\
 	&\ghost{\hat{A_0}}		&  \qw  &\ctrl{-2}			&	  \qw &	  \qw &\ghost{\hat{B}} &	  \qw	&	  \qw \\
}
\]
We decompose the uniformly controlled $R_y$ gates as described in Theorem~8 of~\cite{2}. Noting that 2 \cnot{} gates cancel out each other, we get the following circuit topology.
\[
\Qcircuit @C=0.5em @R=.2em {
\lstick{\ket{0}} 		 &\gate{R_{y}}  			&\targ 	&\gate{R_{y}}           &\targ	 	&\gate{R_{y}}   &\targ 	&\gate{R_{y}}	 &   \meter			&\ucc{} \cwx[1] &  	\\
				&\multigate{1}{\hat{A_0}}		&\ctrl{-1} 	& \qw 			&\qw 	&  \qw		&\ctrl{-1}	&  \qw  		&  \qw			& \multigate{1}{\hat{B}}	 &   \meter   \\
 				&\ghost{\hat{A_0}}			&  \qw 	&  \qw  			&\ctrl{-2}	&  \qw  		&  \qw 	&  \qw    		&  \qw			&\ghost{\hat{B}} &	  \qw		 \\
}
\]
We can further save one \cnot{} gate in the decomposition of the gate $\hat{B}$. By~\cite{unitary_lowerb1, unitary_lowerb2}, we have the following equivalence of circuit topologies.
\[
\Qcircuit @C=0.7em @R=.46em {
& \multigate{2}{\hat{B}}&\qw&&&&  &\gate{U}&\ctrl{2} &\gate{R_y}&\targ&\gate{R_y}&\ctrl{2}&\gate{U}&\qw\\
&  & &          =                                   &&&\\
&\ghost{\hat{B}}  &\qw&&  &&&  \gate{U}&\targ&\gate{R_z} &\ctrl{-2}&\qw&\targ&\gate{U}&\qw \\
}
\]
Since controls commute with measurements, we get the following equivalence.
\[
\Qcircuit @C=0.5em @R=.46em {
& \multigate{2}{\hat{B}}&\meter&&&&&&\gate{U}&\ctrl{2} &\gate{R_y}&\targ&\gate{R_y}&\meter &\control \cwx[2] \cw &&\\
&  &      &&   =                            &&             &&&\\
&\ghost{\hat{B}} &\qw &&&&&&  \gate{U}&\targ&\gate{R_z} &\ctrl{-2}&\qw&\qw&\targ&\gate{U} &\qw  \\
}
\]
Substituting this circuit into the second-to-last one, we find a circuit topology  for channels from two qubits to one qubit of Kraus rank at most four (and hence, in particular, for extreme two to one channels) consisting of $7$ \cnot{} gates.\bigskip

\noindent\emph{2 to 2 channels}|This case works similarly to the
case of two to one channels of Kraus rank at most four. We use the
CSD-approach (cf.~\cite{Iso}) to decompose the isometries arising from
the decomposition scheme described in~Section~\ref{sec:UB_MCC}, and apply
the technical tricks introduced in the Appendix of~\cite{2}. Indeed, decomposing the first two to three isometry arising in the decomposition described  in~Section~\ref{sec:UB_MCC} as described above in the case of two to one channels, we find the following circuit for (extreme) two to two channels of Kraus rank at most four
\[
\Qcircuit @C=0.5em @R=.2em {
\lstick{\ket{0}} 		 &\gate{R_{y}}  			&\targ 	&\gate{R_{y}}           &\targ	 	&\gate{R_{y}}   &\targ 	&\gate{R_{y}}	 &   \meter			&\ucc{} \cwx[2] &\ucc{} \cwx[1] &  	\\
\lstick{\ket{0}} 		 &\qw 			&\qw  	&\qw           &\qw 	 	&\qw    &\qw  	&\qw 	 &\qw 			&\qw & \multigate{2}{\tilde{V}}	 & \meter	\\
				&\multigate{1}{\hat{A_0}}		&\ctrl{-2} 	& \qw 			&\qw 	&  \qw		&\ctrl{-2}	&  \qw  		&  \qw			& \multigate{1}{\hat{B}}	 &\ghost{\tilde{V}} &  \qw   \\
 				&\ghost{\hat{A_0}}			&  \qw 	&  \qw  			&\ctrl{-3}	&  \qw  		&  \qw 	&  \qw    		&  \qw			&\ghost{\hat{B}} &	\ghost{\tilde{V}} &	 \qw	 \\
}
\]
where $\tilde{V}$ denotes the second two to three isometry arising in the decomposition described  in~Section~\ref{sec:UB_MCC}. We can merge the gate $\hat{B}$ into $\tilde{V}$, which leads to
\[
\Qcircuit @C=0.5em @R=.2em {
\lstick{\ket{0}} 		 &\gate{R_{y}}  			&\targ 	&\gate{R_{y}}           &\targ	 	&\gate{R_{y}}   &\targ 	&\gate{R_{y}}	 &   \meter			&\ucc{} \cwx[1] &  	\\
\lstick{\ket{0}} 		 &\qw 			&\qw  	&\qw           &\qw 	 	&\qw    &\qw  	&\qw 	 &\qw 			 & \multigate{2}{\hat{V}}	 & \meter	\\
				&\multigate{1}{\hat{A_0}}		&\ctrl{-2} 	& \qw 			&\qw 	&  \qw		&\ctrl{-2}	&  \qw  		&  \qw				 &\ghost{\hat{V}} &  \qw   \\
 				&\ghost{\hat{A_0}}			&  \qw 	&  \qw  			&\ctrl{-3}	&  \qw  		&  \qw 	&  \qw    		&  \qw			 &	\ghost{\hat{V}} &	 \qw	 \\
}
\]
where $\hat{V}$ is a two to three isometry. Therefore, we can again apply the decomposition scheme described above for two to one channels to $\hat{V}$. Since we do not measure the third qubit at the end of the circuit, we use 8 \cnot{} gates to decompose the gate $\hat{V}$. We conclude that we can decompose any channel from two to two qubits of Kraus rank at most four  (and hence, in particular, any extreme two to two channel) with at most $13$ \cnot{} gates.

\section{Lower bound for isometries in the MeasuredQCM} \label{app:LB}

We give a lower bound on the number of \cnot{} gates required for a
MeasuredQCM topology that is able to generate all
isometries from $m$ to $n$ qubits using the basic gate library
comprising arbitrary single-qubit unitaries and \cnot. A lower bound
for $m$ to $n$ isometries in the quantum circuit model was already
given in~\cite{Iso}. However, here we work in a more general model
than that of~\cite{Iso}, since we allow measurements and classical
controls (and an arbitrary number of ancillas, each of which start in
the state $\ket{0}$).

Let us consider an arbitrary MeasuredQCM topology for $m$ to $n$
isometries consisting of $p \geqslant n$ qubits. The most general
sequence of operations that can be performed by such a circuit
topology, is as follows. We perform a certain gate sequence on the $p$
qubits, before the first qubit is measured. Then we perform a second
gate sequence on $p-1$ qubits, which may be controlled on the
measurement result of the first qubit. Then we measure the second
qubit. In the case where we want to measure the first and second qubit
together, the second gate sequence can be chosen to be trivial. We go
on like this until we have measured $p-n$ qubits.  We then forget
about the measurement results at the end of the MeasuredQCM
topology. Note that the reuse of a qubit after a measurement can be
incorporated into the above procedure by adding an additional ancilla
qubit and copying the measurement outcome there. We conclude that any
MeasuredQCM topology for $m$ to $n$ isometries consisting of
$p \geqslant n$ qubits can be represented in the following form
\begin{equation} \label{eq:general_form}
\Qcircuit @C=0.4em @R=.8em {
&\qw& \multigate{7}{Q_1}&\meter&\ucc{}  \cwx[1] &\cw &\ucc{} \cwx[1] &\cw&&\hdots&&&\cw&\cw&\ucc{} \cwx[1]&  \\
&\qw& \ghost{Q_1}&\qw& \multigate{6}{Q_2} &\meter &\ucc{} \cw \cwx[1] &\cw&&\hdots&&&\cw&\cw&\ucc{} \cwx[1] &\\
&\qw& \ghost{Q_1}&\qw&\ghost{Q_2}  &\qw & \multigate{5}{Q_3}&\qw&&\hdots&&&\cw&\cw& \ucc{} \cwx[1] &\\
&&&&&&&&&&&&&&& \\
&\vdots&&&&&&&&&&&&&\vdots& \\
&&&&&&&&&&&&&&&\\
&\qw&\ghost{Q_1} &\qw& \ghost{Q_2}&\qw&\ghost{Q_3}&\qw&&\hdots&&&\qw&\meter&\ucc{} \cwx[1] \cwx[-1]&  \\
\lstick{n}&{\backslash}\qw& \ghost{Q_1} &\qw& \ghost{Q_2}&\qw& \ghost{Q_3}&\qw&&\hdots& &&\qw&\qw&\gate{Q_{k+1}}&\qw
}
\end{equation}
where $k:=p-n$ and we can think of $Q_i$ as the set of $(p+1-i)$-qubit
unitaries that can be generated by the corresponding quantum circuit
topology.  In other words, there is a first quantum circuit topology (perhaps with
free parameters), followed by a measurement, then a classically
controlled quantum circuit topology conditioned on the outcome, followed by a
second measurement and so on.

\begin{thm} [Lower bound in the MeasuredQCM] \label{thm:lower:bound_MCC} A
  MeasuredQCM topology that is able to generate all
  isometries from $m$ to $n\geqslant m$ qubits using ancillas
  initialized in the state $\ket{0}$ has to consist of at least
  $\lceil\frac{1}{6}\left(2^{n+m+1}-2^{2m}-\textnormal{max}(2, 3m)-1\right)\rceil$
  \cnot{} gates.
  \end{thm}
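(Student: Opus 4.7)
The approach is a parameter counting argument along the same lines as the RandomQCM lower bound in Section~\ref{sec:CR_LB}, but carefully adapted to account for measurements and classical feedforward. The target manifold is the set of isometries $V \in \mathrm{Mat}_{\mathbb{C}}(2^n \times 2^m)$ with $V^\dagger V = I$; this is a complex Stiefel manifold of real dimension $d_{\mathrm{iso}} = 2^{n+m+1} - 2^{2m}$, which is what the parameter-to-isometry map induced by the circuit topology must cover.

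The plan is to start from the canonical form in Eq.~\eqref{eq:general_form}. For every fixed sequence $\vec{b} = (b_1, \ldots, b_k)$ of measurement outcomes, the corresponding branch of the topology is an ordinary QCM circuit that produces a Kraus operator $K_{\vec{b}}$ on the $n$ output qubits, and the channel implemented by the whole topology is $\rho \mapsto \sum_{\vec{b}} K_{\vec{b}} \rho K_{\vec{b}}^\dagger$. Implementing a target isometry $V$ forces $K_{\vec{b}} = c_{\vec{b}} V$ with $\sum_{\vec{b}} |c_{\vec{b}}|^2 = 1$, so up to the scalars $c_{\vec{b}}$ every branch produces the same $V$. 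Using the single-qubit decomposition~\eqref{eq_single_qubit_unitary} and the commutation identity highlighted in Section~\ref{sec:CR_LB}, I would bound the number of free real parameters that a single branch of the topology can introduce: at most $\max(2, 3m)$ from the initial single-qubit layer on the input qubits, plus a constant per \cnot{}, plus one extra from the global phase that must be subtracted because it is physically irrelevant.

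The technical heart of the argument is to show that, after identifying the scalar redundancies $c_{\vec{b}}$ together with the branch-indexed redundancies introduced by classical control, each \cnot{} contributes at most six real parameters to the realised isometry $V$. Combined with the initial-layer count this yields $6 r + \max(2, 3m) + 1 \geqslant 2^{n+m+1} - 2^{2m}$, which rearranges to the claimed lower bound on $r$. The main obstacle is the bookkeeping across branches: naively the parameters appearing in stage $i \geq 2$ duplicate $2^{i-1}$ times because of classical control, so one has to verify that the proportionality constraint $K_{\vec{b}} \propto V$ cuts the total effective contribution back down to the linear-in-$r$ expression above. This is the step that demands the most care, and it is the reason the per-\cnot{} coefficient weakens from the $4$ of Section~\ref{sec:CR_LB} to $6$ here. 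Once this linear-in-$r$ bound on effective parameters is in place, the remaining rearrangement to isolate $r$ is routine and mirrors the derivation at the end of Section~\ref{sec:CR_LB}.
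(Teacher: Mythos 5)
Your reduction of the multi-branch topology to single branches is sound and is essentially Lemmas~\ref{lem:1} and~\ref{lem:PostQCT} of the paper in a more compact form: since the overall channel $\rho\mapsto\sum_{\vec{b}}K_{\vec{b}}\,\rho\, K_{\vec{b}}^{\dagger}$ equals $\rho\mapsto V\rho V^{\dagger}$, which has Kraus rank one, every branch operator satisfies $K_{\vec{b}}=c_{\vec{b}}V$. The paper reaches the same conclusion by first showing, via a repetition and state-distinguishability argument, that the outcome distribution is input-independent, and then replacing each measurement by a projector onto an outcome of nonzero probability (the ``PostQCM'' topologies).

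The parameter count that follows, however, has a genuine gap, and the mechanism you invoke for the coefficient $6$ is not the right one. First, the branch multiplicity is not resolved by arguing that the proportionality constraint ``cuts the effective contribution back down''; no such cross-branch bookkeeping is carried out in your proposal, and it is not clear how it would go. The paper sidesteps the issue entirely: Lemma~\ref{lem:PostQCT_LB_parameters} (via Sard's theorem) and Lemma~\ref{lem:PostQCT_LB} show that each fixed-outcome postselected branch \emph{individually} generates only a measure-zero subset of the isometry manifold, and the proof of the theorem then observes that the union over the at most $2^k$ branches is still of measure zero. Second, and more importantly, the factor $6$ per \cnot{} does not come from branches at all; it comes from the ancillas, which your count omits entirely. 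Each ancilla initialized in $\ket{0}$ can contribute up to $2$ real parameters through its first single-qubit gate, and a priori the number of ancillas is unbounded, so the count $4q+3m$ you would inherit from Section~\ref{sec:CR_LB} does not bound the total. The paper bounds the number of ancillas by the number of \cnot{} gates via a connectivity argument ($q\geqslant \textnormal{min}(p-m,p-1)$, since qubits not quantum-connected to the inputs cannot help), giving $4q+2(p-m)+3m\leqslant 6q+\textnormal{max}(2,3m)$; this is where both the $6$ and the $\textnormal{max}(2,3m)$ (rather than plain $3m$) originate. Without the measure-zero union over branches and the connectivity bound on the ancilla count, your inequality $6r+\textnormal{max}(2,3m)+1\geqslant 2^{n+m+1}-2^{2m}$ is not established.
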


\begin{rmk}
  The lower bound given in Theorem~\ref{thm:lower:bound_MCC} is by a
  constant factor of $\frac{2}{3}$ (to leading order) lower than the
  one for isometries in the quantum circuit model of
  $\lceil \frac{1}{4}\left(2^{n+m+1}-2^{2m}-2n-m-1\right) \rceil$
  \cnot{} gates~\cite{Iso}. Intuitively, the use of ancillas,
  measurements and classical controls should not be helpful for
  implementing isometries. Therefore, we expect that the lower bound
  given in Theorem~\ref{thm:lower:bound_MCC} can be improved.
\end{rmk}

Since isometries from $m$ to $n$ qubits are special cases of $m$ to $n \geqslant m$ channels, we get the following Corollary.
\begin{cor}
A MeasuredQCM topology that is able to generate all channels from $m$ to $n\geqslant m$ qubits has to consist of at least $\lceil\frac{1}{6}\left(2^{n+m+1}-2^{2m}-\textnormal{max}(2, 3m)-1\right)\rceil$ \cnot{} gates.
\end{cor}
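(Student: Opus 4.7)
The plan is to derive this Corollary as an essentially immediate consequence of Theorem~\ref{thm:lower:bound_MCC}. The key observation is that every isometry $V \in \textnormal{Mat}_{\mathbb{C}}(2^n \times 2^m)$ (satisfying $V^\dagger V = I$) defines a quantum channel of Kraus rank one, namely $\mathcal{E}_V : X \mapsto V X V^\dagger$. Thus the set of $m$ to $n$ isometries embeds naturally into the set of all $m$ to $n$ quantum channels, simply by identifying $V$ with $\mathcal{E}_V$.

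Given this embedding, I would argue as follows. Any MeasuredQCM topology which is universal for $m$ to $n$ channels must in particular be able to realize every channel of the form $\mathcal{E}_V$ for some isometry $V$. In the MeasuredQCM (of the general form shown in equation~\eqref{eq:general_form}), the overall operation performed by the circuit -- after tracing out ancillas and forgetting the classical measurement outcomes -- is by construction a quantum channel. Hence "generating the channel $\mathcal{E}_V$" in the sense of the Corollary coincides precisely with "generating the isometry $V$" in the sense of Theorem~\ref{thm:lower:bound_MCC}. Consequently, any universal MeasuredQCM topology for $m$ to $n$ channels already satisfies the hypotheses of Theorem~\ref{thm:lower:bound_MCC}.

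Applying Theorem~\ref{thm:lower:bound_MCC} then gives the claimed lower bound of $\lceil\frac{1}{6}\left(2^{n+m+1}-2^{2m}-\textnormal{max}(2, 3m)-1\right)\rceil$ \cnot{} gates. There is no real obstacle in this proof; the only subtlety worth flagging explicitly is the reconciliation of the two senses of "generate" (for isometries and for channels), which is resolved by the fact that both refer to the end-to-end channel realized by the topology once measurement outcomes are discarded. Everything quantitative has been absorbed into Theorem~\ref{thm:lower:bound_MCC}, so no further parameter counting or manipulation of the circuit form is required.
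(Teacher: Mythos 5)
Your proposal is correct and matches the paper's reasoning exactly: the paper derives this corollary in one line from Theorem~\ref{thm:lower:bound_MCC} by noting that isometries (viewed as Kraus-rank-one channels $X\mapsto VXV^{\dagger}$) are special cases of $m$ to $n\geqslant m$ channels, so a topology universal for channels must in particular generate every isometry. Your explicit reconciliation of the two senses of ``generate'' is a reasonable elaboration of what the paper leaves implicit, but it is the same argument.
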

Moreover, we find the following lower bound for  $m$ to $n <m$ channels.
\begin{cor}
A MeasuredQCM topology that is able to generate all channels from $m$ to $n<m$ qubits has to consist of at least $\lceil\frac{1}{6}\left(4^{n}-3n-1\right)\rceil$ \cnot{} gates.
\end{cor}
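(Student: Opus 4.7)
The plan is to reduce this corollary to the isometry lower bound proved in Theorem~\ref{thm:lower:bound_MCC}, by embedding arbitrary $n$-qubit unitaries into the class of $m$ to $n$ channels.

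First, I would observe that for every unitary $U\in U(2^n)$, the map
\[
\mathcal{E}_U(\rho)\,:=\,U\,\operatorname{Tr}_{1,\dots,m-n}[\rho]\,U^{\dagger}
\]
is a legitimate CPTP map from $m$ to $n$ qubits: first trace out the first $m-n$ input qubits, then conjugate the remaining $n$-qubit state by $U$. Hence any MeasuredQCM topology $\mathcal{C}$ that is able to generate every $m$ to $n$ channel must, in particular, realize every $\mathcal{E}_U$ for an appropriate choice of its free parameters and classical-control functions.

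Next, I would restrict $\mathcal{C}$ to inputs of the product form $|0\rangle\!\langle 0|^{\otimes (m-n)}\otimes\sigma$ with $\sigma$ an arbitrary $n$-qubit state. On such inputs, $\mathcal{E}_U$ acts as $\sigma\mapsto U\sigma U^{\dagger}$. Since the first $m-n$ input qubits now start in $|0\rangle$, they may be relabelled as additional ancillas initialized in $|0\rangle$, joining whatever ancillas $\mathcal{C}$ already uses. The result is a MeasuredQCM topology acting on $n$ designated input qubits together with $p-n$ ancilla qubits in $|0\rangle$, whose measurement pattern, gate sequence, classical control logic, and \cnot{} count are exactly those of $\mathcal{C}$, and which implements every $n$-qubit unitary $U$. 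Since unitaries on $n$ qubits are exactly the $n$ to $n$ isometries, this topology generates all $n$ to $n$ isometries.

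Finally, I would apply Theorem~\ref{thm:lower:bound_MCC} with both input and output dimensions equal to $n$, i.e.\ with the theorem's ``$m$'' replaced by $n$. The bound becomes
\[
\left\lceil \tfrac{1}{6}\bigl(2^{2n+1}-2^{2n}-\max(2,3n)-1\bigr)\right\rceil
=\left\lceil \tfrac{1}{6}\bigl(4^{n}-\max(2,3n)-1\bigr)\right\rceil.
\]
For $n\geqslant 1$ we have $\max(2,3n)=3n$, giving the desired $\lceil \tfrac{1}{6}(4^{n}-3n-1)\rceil$; the case $n=0$ is trivial.

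The only point that needs care is verifying that the restriction to product inputs produces a circuit topology in the precise form~\eqref{eq:general_form} assumed by Theorem~\ref{thm:lower:bound_MCC}, with all non-designated qubits playing the role of $|0\rangle$-initialized ancillas. This is immediate because Theorem~\ref{thm:lower:bound_MCC} allows an arbitrary number of such ancillas and our relabelling only introduces more of them without altering the gate count or the measurement/classical-control structure. Hence no genuine obstacle arises, and the main work is purely the reduction described above together with substituting $m\mapsto n$ into the known isometry bound.
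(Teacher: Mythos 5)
Your proposal is correct and follows essentially the same route as the paper: both arguments embed all $n$-qubit unitaries as $m$ to $n$ channels that trace out $m-n$ input qubits, fix those traced-out inputs to $\ket{0}$ so they become ancillas, and then invoke Theorem~\ref{thm:lower:bound_MCC} with input and output size both equal to $n$, using $\max(2,3n)=3n$ for $n\geqslant 1$. The only differences are cosmetic (which $m-n$ qubits are traced out, and phrasing the reduction as a contrapositive versus a direct reduction).
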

\begin{proof}
Assume to the contrary that there exists a MeasuredQCM topology consisting of fewer than $\lceil\frac{1}{6}\left(4^{n}-3n-1\right)\rceil$ \cnot{} gates that is able to generate all channels from $m$ to $n<m$ qubits. Such a topology must, in particular, be able to implement all $n$-qubit unitaries from the first $n$ input qubits to the $n$ output qubits (independently of the state of the other $m-n$ input qubits). We can turn this topology into a MeasuredQCM topology for unitaries on $n$ qubits by fixing the state of the last $m-n$ input qubits to $\ket{0}$. But such a topology cannot exist by Theorem~\ref{thm:lower:bound_MCC}.
\end{proof}

Before giving the proof of Theorem~\ref{thm:lower:bound_MCC}, we
sketch the idea. We start with a circuit topology of the
form~\eqref{eq:general_form} consisting of $p \geqslant n$ qubits,
$p-m$ of which are initially in the state $\ket{0}$, and assume that
it is able to generate all isometries from $m$ to $n$ qubits. In
principle, one would expect that a circuit topology controlled on one
(randomized) classical bit can introduce twice as many parameters as
the circuit topology itself, and hence that controlling on measurement
results can help to reduce the \cnot{} count (as we saw in
Section~\ref{sec:UB_MCC}). However, in the special case where we want
to implement isometries, the classical control cannot increase the
number of introduced parameters. The reason for this is related to the
fact that the distribution of the measurement outputs are independent
of the input state of the isometry.  The precise statement is given in
the following Lemma.

\begin{lem} [Independence of measurement results] \label{lem:1} Assume that the
  whole circuit in~\eqref{eq:general_form} performs an isometry from
   $m$ to  $n$ qubits for a certain choice of the free parameters of the MeasuredQCM topology. Then the distribution of the measurement outcomes is independent of the input state of the isometry.
\end{lem}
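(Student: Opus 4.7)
The plan is to expand the circuit, with the parameters fixed so that the overall input-to-output channel is the isometric channel $\rho\mapsto V\rho V^\dagger$, as a Kraus sum indexed by measurement-outcome strings, and then use that a Kraus-rank-one channel forces every Kraus operator to be proportional to a single isometry. Once that is done, the probability of each outcome string factors through $V^\dagger V=I$ and loses its dependence on $\rho$.

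First I would absorb the classical control into outcome-dependent unitaries: for each string $\vec b=(b_1,\dots,b_k)$ of measurement results, let $U_{\vec b}$ be the composite unitary obtained by choosing each block $Q_i$ according to the classical-control rule applied to $(b_1,\dots,b_{i-1})$, and define
\[
K_{\vec b}=\bigl(\langle b_1|\otimes\cdots\otimes\langle b_k|\otimes I_n\bigr)\,U_{\vec b}\,\bigl(|0\rangle^{\otimes(p-m)}\otimes I_m\bigr).
\]
A routine calculation (the chain rule for conditional probabilities, combined with the standard principle that measurement commutes with a subsequently-applied classical control) shows that $\{K_{\vec b}\}_{\vec b}$ is a Kraus representation of the overall $m$-to-$n$ channel $\mathcal E$ realized by the circuit in~\eqref{eq:general_form} once the outcomes are discarded, and that the probability of seeing $\vec b$ on input $\rho$ is $p(\vec b\mid\rho)=\mathrm{tr}(K_{\vec b}^\dagger K_{\vec b}\rho)$.

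Next I would invoke the assumption $\mathcal E(\rho)=V\rho V^\dagger$. Since this channel has Kraus rank one, the standard uniqueness of the Kraus representation up to isometric mixing~\cite{choi} forces every operator in any Kraus representation to be a scalar multiple of $V$. In particular $K_{\vec b}=\alpha_{\vec b} V$ for some $\alpha_{\vec b}\in\mathbb C$, and trace preservation $\sum_{\vec b} K_{\vec b}^\dagger K_{\vec b}=I$ gives $\sum_{\vec b}|\alpha_{\vec b}|^2=1$. Substituting back,
\[
p(\vec b\mid\rho)=|\alpha_{\vec b}|^2\,\mathrm{tr}(V^\dagger V\rho)=|\alpha_{\vec b}|^2,
\]
which is independent of $\rho$ as claimed.

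The only part that requires any real care is the first step: checking that the adaptive, classically-controlled structure in~\eqref{eq:general_form} does genuinely unfold into a Kraus sum with the operators $K_{\vec b}$ above. This is essentially bookkeeping—one pushes the $i$th measurement through the blocks $Q_{i+1},\dots,Q_{k+1}$ using the identity that conditioning a unitary on a classical bit coming from a computational-basis measurement is the same as a coherent controlled unitary followed by the measurement—but it is the place where one could otherwise slip up on the ordering of the outcome-dependent unitaries. Once that is in place, the rest of the argument is immediate.
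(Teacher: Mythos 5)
Your proof is correct, but it takes a genuinely different route from the paper's. You unfold the adaptive circuit into a Kraus representation $\{K_{\vec b}\}$ indexed by outcome strings and then invoke the unitary freedom of Kraus decompositions~\cite{choi}: a Kraus-rank-one channel $\rho\mapsto V\rho V^{\dagger}$ forces every Kraus operator in any representation to be a scalar multiple of $V$, so $p(\vec b\mid\rho)=|\alpha_{\vec b}|^2\,\tr(V^{\dagger}V\rho)=|\alpha_{\vec b}|^2$. The paper instead argues operationally by contradiction: if some measurement's distribution differed on two non-orthogonal inputs $\ket{\psi_0},\ket{\psi_1}$, one could undo the isometry, repeat the circuit $t$ times to collect i.i.d.\ samples from $P^{\times t}$ or $Q^{\times t}$, and thereby distinguish the two states arbitrarily well, contradicting the bound $\tfrac{1}{2}[1+D(\proj{\psi_0},\proj{\psi_1})]<1$ on the guessing probability; the general statement then follows because agreement on all non-orthogonal pairs forces agreement everywhere. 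Your algebraic argument buys a slightly stronger and more direct conclusion (the full joint outcome distribution is input-independent, with the explicit weights $|\alpha_{\vec b}|^2$) at the cost of the bookkeeping needed to verify that the classically controlled, interleaved measurements really do collapse into the operators $K_{\vec b}$ — which works here because each projector $\bra{b_i}$ acts on a qubit untouched by the later blocks and so commutes past them. The paper's argument avoids that unfolding entirely and treats the circuit as a black box, but needs the asymptotic discrimination step and the reduction from arbitrary pairs to non-orthogonal ones. Both are valid proofs of the lemma.
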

\begin{proof}
  It suffices to show this for all non-orthogonal states.\footnote{If
    all non-orthogonal states have the same distribution, then all
    states do, since the distribution for two orthogonal states
    $\ket{\psi_0}$ and $\ket{\psi_1}$ must then agree with that of any
    third state $\ket{\psi}$ that is not orthogonal with both.} Take
  two non-orthogonal input states $\ket{\psi_0}$ and $\ket{\psi_1}$
  and assume to the contrary that there exists a measurement $M$
  in~\eqref{eq:general_form}, whose output distribution is different
  depending on which of these states is input. Let $P$ be the
  distribution over the outcomes for $M$ if we choose the input state
  $\ket{\psi_0}$, and $Q$ be the analogous probability distribution if
  we choose the input state $\ket{\psi_1}$. Since we are implementing
  an isometry, the output states $\ket{\psi'_0}:=V \ket{\psi_0}$ and
  $\ket{\psi'_1}:=V \ket{\psi_1}$ can be turned back into
  $\ket{\psi_0}$ and $\ket{\psi_1}$.  If we repeat this procedure $t$
  times, then, the distribution of outcomes for $M$ is either the
  i.i.d.\ distribution $P^{\times t}$ or the i.i.d.\ distribution
  $Q^{\times t}$. Since $Q\neq P$ by assumption, these two
  distributions can be distinguished arbitrarily well for large enough
  $t$. This contradicts the fact that in any measurement procedure the
  maximum probability of correctly guessing which of these states is
  given as an input is
  $\frac{1}{2}\left[1+D(\ketbra{\psi_0}{\psi_0},\ketbra{\psi_1}{\psi_1})\right]<1$,
  where $D$ is the trace distance. \end{proof}

To handle the independence of the measurement distributions on the
input state, it is useful to introduce the concept of postselection
(see also~\cite{Aaronson}). We introduce the Postselected Quantum
Circuit Model (PostQCM for short) as a modification of the QCM to
include also single-qubit projectors onto the states $\ket{0}$ and
$\ket{1}$ at the end of the circuit.\footnote{Note that this is
  equivalent to a measurement in the $\{\ket{0},\ket{1}\}$ basis and
  postselecting on one of the outcomes.} Note that the single-qubit
projectors correspond to linear maps that are not unitary.  We say
that a PostQCM topology with associated total linear map $C$
implements the isometry
$V$, if $C=c V$, where $c\neq0$ is some complex number.\\

We say that a PostQCM topology corresponds to a MeasuredQCM topology
of the form~\eqref{eq:general_form}, if it can be obtained
from~\eqref{eq:general_form} using the following procedure.  First,
every measurement is replaced by a single-qubit projector (onto
either $\ket{0}$ or $\ket{1}$). Then all classical controls are
removed. Finally we move the single-qubit projectors to the end of the
circuit. Note that the number of single-qubit gates and \cnots{} of a
circuit topology of the form~\eqref{eq:general_form} is the same as
that of the PostQCM topology formed by making these replacements.

\begin{lem} \label{lem:PostQCT} The set of isometries that can be
  generated by a MeasuredQCM topology of the
  form~\eqref{eq:general_form} is a subset of the set of isometries
  that can be generated by all the corresponding PostQCM topologies
  together.
\end{lem}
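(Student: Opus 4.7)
The plan is to fix a MeasuredQCM topology of the form~\eqref{eq:general_form} together with a choice of its free parameters that implements an arbitrary isometry $V$ from $m$ to $n$ qubits, and then to single out one measurement-outcome sequence whose associated postselected circuit reproduces $V$ (up to a nonzero scalar, which is all that is required for a PostQCM topology to ``implement'' $V$ in the sense defined just before the lemma).

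First I would fix notation. For each outcome sequence $\vec{s}=(s_1,\dots,s_k)\in\{0,1\}^k$ of the $k=p-n$ measurements, let $M_{\vec{s}}$ denote the linear map from $m$-qubit inputs to $n$-qubit outputs obtained by (i) replacing the $j$th measurement by the projector $\ket{s_j}\bra{s_j}$, (ii) substituting each classically controlled gate by the unitary selected by the relevant entries of $\vec{s}$, and (iii) sliding all projectors to the end of the circuit. Step (iii) is legitimate because in~\eqref{eq:general_form} the $j$th qubit is untouched after its measurement, so the projector commutes past every subsequent gate. By construction $M_{\vec{s}}$ is exactly the total linear map of the PostQCM topology corresponding to the outcome $\vec{s}$, so the set of $\{M_{\vec{s}}\}_{\vec{s}}$ enumerates the linear maps of ``all corresponding PostQCM topologies'' as $\vec{s}$ ranges over $\{0,1\}^k$.

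Next I would exploit that the MeasuredQCM, with outcomes forgotten, realises the channel $\rho\mapsto V\rho V^\dagger$. In Kraus form this reads
\[
\sum_{\vec{s}} M_{\vec{s}}\,\rho\,M_{\vec{s}}^{\dagger} \;=\; V\rho\, V^{\dagger}\qquad\text{for every $m$-qubit state }\rho.
\]
The standard unitary-freedom theorem for Kraus representations, applied with the single-operator family $\{V\}$ on the right-hand side, forces the existence of complex numbers $\alpha_{\vec{s}}$ satisfying $M_{\vec{s}}=\alpha_{\vec{s}}V$ for every $\vec{s}$, together with $\sum_{\vec{s}}|\alpha_{\vec{s}}|^{2}=1$. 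This is a strengthening of Lemma~\ref{lem:1}: not only is the outcome probability $|\alpha_{\vec{s}}|^{2}=\bra{\psi}M_{\vec{s}}^{\dagger}M_{\vec{s}}\ket{\psi}$ independent of the input (since $V^\dagger V=I$), but each conditional Kraus operator is literally a scalar multiple of $V$.

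Finally, because the $|\alpha_{\vec{s}}|^2$ sum to $1$, there is at least one $\vec{s}^{\ast}$ with $\alpha_{\vec{s}^{\ast}}\ne 0$. The PostQCM topology indexed by $\vec{s}^{\ast}$ then has total linear map $M_{\vec{s}^{\ast}}=\alpha_{\vec{s}^{\ast}}V$, a nonzero scalar multiple of $V$, and therefore implements $V$ in the PostQCM sense. The main obstacle I anticipate is not the linear algebra---which, once Kraus-freedom is invoked, is essentially one line---but the bookkeeping in the first step: one must verify carefully that the recipe ``replace each measurement by a projector, remove classical controls, push projectors to the end'' genuinely parametrises every PostQCM topology ``corresponding'' to the given MeasuredQCM, and that the resulting linear map really is $M_{\vec{s}}$ (in particular, that commuting projectors past later gates is always permissible in~\eqref{eq:general_form}, and that hard-wiring classical controls to the values forced by $\vec{s}$ faithfully mimics the classical post-processing in the original circuit). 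Once that correspondence is nailed down, the conclusion of the lemma is immediate.
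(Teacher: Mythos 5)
Your proof is correct, and it reaches the same endpoint as the paper's --- fix an outcome sequence of nonzero probability and hard-wire the projectors and classical controls accordingly --- but it justifies the central step by a genuinely different argument. The paper first establishes Lemma~\ref{lem:1} (input-independence of the outcome distribution) via a repetition-and-state-discrimination argument, and then infers from it that the circuit performs the isometry for every outcome, so that any outcome of nonzero probability may be fixed. You instead observe that forgetting the outcomes yields the Kraus decomposition $\sum_{\vec{s}} M_{\vec{s}}\,\rho\,M_{\vec{s}}^{\dagger}=V\rho V^{\dagger}$ and invoke the unitary freedom of Kraus representations for the rank-one channel $V\cdot V^{\dagger}$ to conclude $M_{\vec{s}}=\alpha_{\vec{s}}V$ with $\sum_{\vec{s}}|\alpha_{\vec{s}}|^{2}=1$; choosing $\vec{s}^{\ast}$ with $\alpha_{\vec{s}^{\ast}}\neq0$ gives a corresponding PostQCM topology whose total linear map is a nonzero scalar multiple of $V$, which is exactly the paper's definition of a PostQCM topology implementing $V$. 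This algebraic route is more direct: it bypasses Lemma~\ref{lem:1} entirely (indeed it implies it, since $\|M_{\vec{s}}\ket{\psi}\|^{2}=|\alpha_{\vec{s}}|^{2}$ for normalized inputs) and makes fully rigorous the paper's somewhat informal inference that ``the circuit must perform the isometry regardless of the measurement outputs.'' Your bookkeeping concerns are also resolved correctly: in~\eqref{eq:general_form} a measured qubit is not acted on afterwards, so sliding its projector to the end of the circuit is legitimate, and fixing the classically controlled gates to the values dictated by $\vec{s}$ is simply a particular choice of the free parameters of the corresponding PostQCM topology. What the paper's route buys is the reusable and operationally meaningful Lemma~\ref{lem:1}; what yours buys is brevity and the stronger conclusion that every conditional Kraus operator is literally proportional to $V$.
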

\begin{proof}
  Assume that an isometry $V$ can be generated by a MeasuredQCM
  topology of the form~\eqref{eq:general_form} for a certain choice of
  its free parameters. Hence, by Lemma~\ref{lem:1}, the distribution
  of the measurement outputs is independent of the input state of the
  isometry. Therefore, the circuit must perform the isometry
  regardless of the measurement outputs and hence we can choose and
  fix an arbitrary output which occurs with nonzero probability. In
  other words, we can replace each measurement
  in~\eqref{eq:general_form} with a single-qubit projector onto
  $\ket{0}$ if the probability of measuring $0$ is nonzero, and with
  a single-qubit projector onto $\ket{1}$ otherwise. Note that this
  circuit can still perform the isometry $V$. Removing the classical
  controls, which does not change the action performed by the whole
  circuit, we obtain a corresponding PostQCM topology that is able to
  generate $V$.
\end{proof}

\begin{lem} \label{lem:PostQCT_LB_parameters} A PostQCM topology that
  has fewer than $2^{n+m+1}-2^{2m}-1$ free parameters can only
  generate a set of measure zero of the set of all $m$ to $n$
  isometries (where we identify isometries that only differ by a
  global phase).
\end{lem}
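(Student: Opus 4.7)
The plan is a parameter-counting argument: exhibit the set of isometries generated by the PostQCM topology as the image of a map defined on (a subset of) $\mathbb{R}^p$, and since $p$ is smaller than the real dimension $d := 2^{n+m+1}-2^{2m}-1$ of the target manifold, a Sard-type argument forces the image to have measure zero in that manifold.

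First I would set up the parameterization. Fixing the topology, every single-qubit unitary is written in the Euler form of Eq.~\eqref{eq_single_qubit_unitary}, so the associated total linear map $C(\theta) \in \textnormal{Mat}_{\mathbb{C}}(2^n \times 2^m)$ depends real-analytically on the $p$ real angles $\theta \in \mathbb{R}^p$; CNOTs and the projectors onto $\ket{0}$ or $\ket{1}$ introduce no parameters. Next I would identify the target manifold. The set of $m$ to $n$ isometries is the Stiefel manifold $V_{2^m}(\mathbb{C}^{2^n})$, of real dimension $2^{n+m+1}-2^{2m}$ (count: $2^{n+m+1}$ real matrix entries, minus the $2^{2m}$ real constraints encoded in $V^\dagger V = I$). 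Quotienting by the free global $U(1)$-phase action produces a smooth manifold $\mathcal{I}$ of real dimension exactly $d$.

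Then I would assemble the map of interest. Composing $\theta \mapsto C(\theta)$ with the smooth projection from nonzero matrices to the projective space $\mathbb{P}(\textnormal{Mat}_{\mathbb{C}}(2^n \times 2^m))$ yields a smooth map $\tilde{\Psi}$ defined on the open set $\{\theta : C(\theta) \neq 0\}$; the complementary set of parameters produces $C = 0$, which cannot be proportional to an isometry and can therefore be discarded. Under the natural inclusion $\mathcal{I} \hookrightarrow \mathbb{P}(\textnormal{Mat}_{\mathbb{C}}(2^n \times 2^m))$ (since an isometry $V$ and its nonzero scalar multiples $cV$ give the same projective class), the set of isometries actually generated by the PostQCM topology is exactly $\tilde{\Psi}(\mathbb{R}^p) \cap \mathcal{I}$. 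Since the image of a smooth map from an open subset of $\mathbb{R}^p$ has Hausdorff dimension at most $p$, and $p < d$, this intersection has Lebesgue measure zero in the $d$-dimensional manifold $\mathcal{I}$.

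The main technical obstacle is in the third step: the preimage $\tilde{\Psi}^{-1}(\mathcal{I}) \subset \mathbb{R}^p$ is a real-algebraic (hence semialgebraic) subset rather than a nice open manifold, so I cannot apply Sard's theorem to it directly. The way I would finish is to note that a smooth image of a set of Hausdorff dimension $\leqslant p$ into a $d$-dimensional smooth manifold with $d>p$ has measure zero in the target, or equivalently to stratify the semialgebraic preimage into finitely many smooth pieces of dimension $\leqslant p$ and apply the standard Sard consequence on each stratum. This is the only place where care is required; everything else is bookkeeping of parameter counts.
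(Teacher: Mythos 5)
Your proposal is correct and follows essentially the same route as the paper: parametrize the total linear map $C(\theta)$ by the angles of the single-qubit gates, map into the quotient of the isometry (Stiefel) manifold by the global phase, and conclude via the Sard-type fact that a smooth image of a parameter space of dimension less than $2^{n+m+1}-2^{2m}-1$ is null. The only difference is bookkeeping: the paper restricts to the subset $D$ of parameters where $C(\theta)$ is proportional to an isometry and normalizes before applying Sard, whereas you pass through projective space and intersect the image with the isometry manifold, which handles the non-openness of that parameter set somewhat more explicitly.
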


\begin{proof}
The argument works similarly to the arguments used in~\cite{MB,
  unitary_lowerb1, unitary_lowerb2}. Let us denote by $C$ the linear
map corresponding to the PostQCM topology. We can think of this map as sending a certain choice of $d$ real parameters $(\theta_1,\dots,\theta_d)$ of the PostQCM topology to the corresponding $2^n\times 2^m$ matrix $C(\theta_1,\dots,\theta_d)$, which describes the whole action of the circuit. We restrict the domain of the free parameters to the set $D \subset \mathbb{R}^d$, such that for all  $(\theta_1,\dots,\theta_d) \in D$ there exists an isometry $V$ and a complex number $c\neq 0$, such that $C(\theta_1,\dots,\theta_d)=cV$. We denote the set of one dimensional unitaries by $U(1)$ and define the orbit space $V_{m,n}/U(1)$, which corresponds to the set of all $m$ to $n$ isometries, after quotienting out the (physically undetectable) global phase. We denote the corresponding (smooth) quotient map by $\pi:V_{m,n}\mapsto V_{m,n}/U(1)$ (see~\cite{MB} for more details). Then, we define the smooth map $T(\theta_1,\dots,\theta_d):=\pi \circ \tfrac{C(\theta_1,\dots,\theta_d)}{\sqrt{2^{-m} \tr{\,C(\theta_1,\dots,\theta_d)^{\dagger}C(\theta_1,\dots,\theta_d)}}}:D \mapsto V_{m,n}/U(1)$. By Sard's theorem, $T(D)$ is of measure zero in $V_{m,n}/U(1)$ if $d < \dim(V_{m,n}/U(1))=2^{m+n+1}-2^{2m}-1$.
\end{proof} 

\begin{lem}  \label{lem:PostQCT_LB} 
A PostQCM topology that consists of fewer than $\lceil\frac{1}{6}\left(2^{n+m+1}-2^{2m}-\textnormal{max}(2, 3m)-1\right)\rceil$ \cnot{} gates (and an arbitrary number of ancilla qubits initialized in the state $\ket{0}$) can only generate a set of measure zero of the set of all $m$ to $n$ isometries.
\end{lem}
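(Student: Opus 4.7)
The plan is to deduce Lemma \ref{lem:PostQCT_LB} from Lemma \ref{lem:PostQCT_LB_parameters} by establishing an upper bound of $6r + \max(2,3m)$ on the number of free real parameters of any PostQCM topology with $r$ \cnot{} gates; Lemma \ref{lem:PostQCT_LB_parameters} then delivers measure zero whenever $6r + \max(2,3m) < 2^{n+m+1} - 2^{2m} - 1$, and rearranging yields exactly the \cnot{} threshold stated in the lemma.

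To establish the parameter bound, I would first put the topology into the standard alternating form from Section~\ref{sec:CR_LB}: one single-qubit gate on each wire at the start, and two single-qubit gates after each \cnot{} (on its control and its target). Without loss of generality, every qubit participates in at least one \cnot{}, because a qubit that does not can only contribute either a unitary factor absorbable into a neighboring post-\cnot{} gate, or, for a $\ket{0}$-initialized projected ancilla, an overall normalization that can be absorbed into the next non-trivial gate; either way, the qubit can be removed from the topology. The commutation identity of Section~\ref{sec:CR_LB} then shows, exactly as in the QCM argument, that each \cnot{} contributes at most four new real parameters through its two following single-qubit gates. The initial single-qubit gate on each of the $m$ input qubits contributes up to $3m$ parameters in total. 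On each ancilla qubit, the initial gate acts on $\ket{0}$, so the rightmost $R_z$-factor in the $R_zR_yR_z$-decomposition is merely a phase and can be absorbed; analogously, the final gate $V$ on a projected qubit is seen only through $\bra{0}V$, whose leftmost $R_z$-factor again contributes only a phase. One further parameter is removed by the physically irrelevant overall global phase of the circuit.

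The main obstacle is combining all of these savings so that the final bound depends only on $r$ and $m$ and not on the total number of qubits $p$; a naive accounting yields a $p$-dependent upper bound of roughly $6r + m + n + p$. One must show that the additional free parameters each extra ancilla brings (through its initial gate) are cancelled by the extra commutation, $\ket{0}$-stabilizer, and projection overlaps it also contributes, so that in the worst case the total is $6r + 3m$ when $m \geqslant 1$, and $6r + 2$ when $m=0$ (reflecting the two parameters needed to prepare a nontrivial initial qubit state when no input qubit is available), matching $6r + \max(2,3m)$. Carrying out this bookkeeping, in the spirit of the parameter-counting arguments in Section~\ref{sec:CR_LB} and in \cite{MB, unitary_lowerb1, unitary_lowerb2}, is the technical heart of the proof.
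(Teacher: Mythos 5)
Your overall strategy is the same as the paper's: reduce to Lemma~\ref{lem:PostQCT_LB_parameters} and bound the number of free parameters of a PostQCM topology with $r$ \cnot{} gates by $6r+\max(2,3m)$, using the commutation identity (four parameters per \cnot{}), three parameters per input qubit, and two parameters per $\ket{0}$-initialized ancilla. However, you explicitly leave unproved the one step that actually produces the constant $6$, namely the elimination of the dependence on the total qubit number $p$, and the partial idea you do offer does not suffice. Your proposed reduction is ``without loss of generality every qubit participates in at least one \cnot{}''; since each \cnot{} touches only two qubits, this yields only $p-m\leqslant 2r$ for the ancillas, hence a parameter bound of $4r+2(p-m)+3m\leqslant 8r+3m$, which would give a lower bound with prefactor $\tfrac{1}{8}$ rather than $\tfrac{1}{6}$. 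Moreover, ``participates in a \cnot{}'' is not the right condition: a qubit can touch \cnot{}s and still lie in a connected component of the interaction graph disjoint from the input qubits, and a disconnected \emph{output} qubit cannot simply be ``removed'' or ``absorbed into a neighboring gate'' as you suggest.

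The paper closes this gap with a connectivity argument: one may assume $r\geqslant \min(p-m,\,p-1)$, because a connected graph on $V$ vertices needs at least $V-1$ edges, so with fewer \cnot{}s some ancilla or non-input output qubit is not quantum-connected to the $m$ inputs. If such a disconnected qubit is an output qubit, the topology can only produce isometries whose outputs are product with a fixed pure state on that qubit --- a measure-zero set for $n>1$ (the case $n=1$ being trivial); if all disconnected qubits are ancillas, they act trivially and can be removed. This gives $2(p-m)\leqslant 2r$ (or $2p\leqslant 2r+2$ when $m=0$), hence $4r+2(p-m)+3m\leqslant 6r+\max(2,3m)$, which is exactly the bound you need. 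Your additional observation about the leftmost $R_z$-factor being a phase under $\bra{0}$ at the projections is correct but unnecessary (the paper instead commutes all rotations to the left and harvests the saving only at the $\ket{0}$ initializations); be careful, though, that the rotations immediately preceding a projection cannot be discarded outright, as the paper notes. Until you supply the connectivity step, the proof establishes only a weaker constant than the lemma claims.
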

\begin{proof}
  We may assume $n>1$ (for $n=1$ the statement of the Lemma is
  trivial). By Lemma~\ref{lem:PostQCT_LB_parameters}, we have left to
  show that a PostQCM topology consisting of fewer than
  $\lceil\frac{1}{6}\left(2^{n+m+1}-2^{2m}-\textnormal{max}(2, 3m)-1\right)\rceil$ \cnots{}
  cannot introduce $2^{m+n+1}-2^{2m}-1$ or more (independent) real
  parameters. Since single-qubit projections do not introduce
  parameters into the circuit, all parameters must be introduced by
  single-qubit gates. To relate the number of single-qubit rotations
  to the number of \cnot{} gates, we use similar arguments to those
  used in Section~\ref{sec:CR_LB} to derive the lower bound for
  channels allowing classical randomness. We again use the commutation
  properties of \cnot{} gates and single-qubit rotations, which show
  that a \cnot{} can introduce at most four parameters. However, in
  contrast to Section~\ref{sec:CR_LB}, we commute all single-qubit
  rotations to the left (instead of to the right) and use the fact
  that the first single-qubit rotation on an ancilla can introduce at
  most two parameters (because an ancilla qubit always starts in the
  state $\ket{0}$ and two parameters are enough to describe an
  arbitrary single-qubit pure state). Note that, in general, the
  single-qubit rotations performed directly before a single-qubit
  projection have a nontrivial effect on the operation performed by
  the whole circuit. Thus, a PostQCM topology with $q$ \cnot{} gates
  and consisting of $p\geqslant n$ qubits can introduce at most
  $4q+2(p-m)+3m$ parameters.  Note that we may assume
  $q\geqslant \textnormal{min}(p-m, p-1)$, since otherwise, there exists a collection of
  ancilla qubits and output qubits (which are not input qubits) that
  are not quantum-connected to the $m$ input qubits.\footnote{This
    follows from a simple statement in graph theory, that a connected
    graph must have at least $V-1$ edges, where $V$ denotes the number
    of vertices of the graph.} Any unconnected output qubits that are
  not input qubits start in the state $\ket{0}$ and always remain
  product with the other output qubits.\footnote{If all output qubits
    are not quantum connected to the input qubits, the PostQCM
    topology can generate only a fixed output state independent on the
    input state, and hence is not able to perform any isometry.} For
  $n>1$, the set of isometries for which the output state always has a
  product form has fewer parameters than the set of arbitrary
  isometries, and is hence of measure zero. In the case that all the
  unconnected qubits are ancilla qubits, they have a trivial effect on
  the performed circuit and can be removed without affecting the
  action of the circuit. Therefore, a PostQCM topology with $q$
  \cnot{} gates can introduce at most $6q+\textnormal{max}(2, 3m)$ parameters and hence, a
  circuit topology consisting of fewer than
  $\lceil\frac{1}{6}\left(2^{n+m+1}-2^{2m}-\textnormal{max}(2, 3m)-1\right)\rceil$ \cnots{}
  cannot introduce $2^{m+n+1}-2^{2m}-1$ (or more) parameters.
\end{proof}

\begin{proof}[Proof of Theorem~\ref{thm:lower:bound_MCC}]
  Consider a MeasuredQCM topology of the form~\eqref{eq:general_form}
  consisting of fewer than
  $\lceil \frac{1}{6}\left(2^{n+m+1}-2^{2m}-\textnormal{max}(2, 3m)-1\right)\rceil$ \cnot{}
  gates. Since each of the corresponding PostQCM topologies consists
  of the same number of \cnot{} gates, each can only generate a set of
  measure zero in the set of all $m$ to $n$ isometries by
  Lemma~\ref{lem:PostQCT_LB}. Since the MeasuredQCM
  topology~\eqref{eq:general_form} has at most $2^k$ corresponding
  PostQCM topologies, the set of isometries that can be generated by
  all corresponding PostQCM topologies together is still of measure
  zero. The theorem then follows from Lemma~\ref{lem:PostQCT} and the
  fact that a subset of a set of measure zero is again of measure
  zero.
\end{proof}


\begin{thebibliography}{99}


\bibitem{open_quantum_simulation}
J.~T.~Barreiro \emph{et al.}, Nature (London)~{\bf 470}, 486 (2011).

\bibitem{5}
A.~Barenco, C.~H.~Bennett, R.~Cleve, D.~P.~DiVincenzo, N.~Margolus, P.~Shor, T.~Sleator, J.~Smolin, and H.~Weinfurter, Phys. Rev. A~{\bf 52}, 3457 (1995). 


\bibitem{Ballance_new}
 C.~J.~Ballance, T.~P.~Harty, N.~M.~Linke, M.~A.~Sepiol, and D.~M.~Lucas,  Phys. Rev. Lett.~{\bf 117}, 060504 (2016). 


\bibitem{Gaebler}

J.~P~Gaebler \emph{et al.}, Phys. Rev. Lett.~{\bf 117}, 060505 (2016).



\bibitem{unitary_lowerb1}

V.~V. Shende, I.~L. Markov, and S.~S. Bullock, Phys. Rev. A~{\bf 69}, 062321 (2004). 

 \bibitem{unitary_lowerb2}
V.~Shende, I.~Markov, and S.~Bullock, in   \emph{Proceedings of the Design, Automation and Test in
  Europe Conference and Exhibition}, Paris, France (IEEE, 2004), p. 980.

\bibitem{choi} 
M.-D.~Choi, Lin. Alg. Appl.~{\bf 10}, 285 (1975).

\bibitem{maslov} 
D.~Maslov, G.~W.~Dueck, D.~M.~Miller, and C.~Negrevergne, {IEEE} Transactions on Computer-Aided Design of Integrated
  Circuits and Systems~{\bf 27}(3), 436, (2008).


\bibitem{Knill}
E.~Knill,  LANL report LAUR-95-2225, arXiv:quant-ph/9508006 (1995).

\bibitem{Aho}
A.~V.~Aho and K.~M.~Svore, quant-ph/0311008 (2003).

\bibitem{Vartiainen}
J.~J.~Vartiainen, M.~M\"ott\"onen, and M.~M.~Salomaa, Phys. Rev. Lett.~{\bf92}, 177902 (2004).

\bibitem{unif_rot}

M.~M\"ott\"onen, J.~J. Vartiainen, V.~Bergholm, and M.~M.~Salomaa, Phys. Rev. Lett.~{\bf 93}, 130502 (2004). 


\bibitem{10}

V.~Bergholm, J.~J. Vartiainen, M.~M\"ott\"onen, and M.~M.~Salomaa, Phys. Rev. A~{\bf 71}, 052330 (2005).

\bibitem{2}

V.~V. Shende, S.~S. Bullock, and I.~L.~Markov, {IEEE} Transactions on Computer-Aided Design of Integrated
  Circuits and Systems~{\bf 25}, 1000 (2006). 

\bibitem{3}

M.~Plesch and {\v{C}}.~Brukner, Phys. Rev. A~{\bf 83}, 032302 (2011). 


\bibitem{Iso}

R.~Iten, R.~Colbeck, I.~Kukuljan, J.~Home and M.~Christandl, Phys. Rev. A~{\bf 93}, 032318 (2016).

\bibitem{Stinespring}
W.~F.~Stinespring, Proc. Amer. Math. Soc.~{\bf 6}, 211 (1955).


\bibitem{one_ancilla}
S.~Lloyd and L.~Viola, Phys. Rev. A~{\bf65}, 010101(R) (2001).

\bibitem{binary_search}
E.~Andersson and D.~K.~L.~Oi, Phys. Rev. A~{\bf77}, 052104 (2008).

\bibitem{Shen}
C.~Shen \emph{et al.}, Phys. Rev. B~{\bf95}, 134501  (2017).

\bibitem{Wang_qubit}

D.-S.~Wang, D.~W.~Berry, M.~C. de Oliveira and B.~C.~Sanders, Phys. Rev. Lett.~{\bf111}, 130504  (2013).

\bibitem{Wang_qudit}

D.-S.~Wang and B.~C.~Sanders, New J. Phys.~{\bf17}, 043004 (2015).


\bibitem{Wang_new}
D.-S.~Wang, Int. J. Quantum Inform.~{\bf 14}, 1650045 (2016).



\bibitem{Friedland} 	
S.~Friedland and R.~Loewy, On the extreme points of quantum channels, arXiv:1309.5898 (2013).


\bibitem{Buch}
M.~A. Nielsen and I.~L. Chuang, \emph{Quantum Computation and Quantum
  Information: 10th Anniversary Edition} (Cambridge University Press, New York, 2011).
  
 
\bibitem{MB}
R.~Iten and R.~Colbeck, Smooth Manifold Structure for Extreme
Channels, arXiv:1610.02513 (2016).


\bibitem{shor}
P.~Shor, {SIAM} Journal on Computing~{\bf 26}, 1484 (1997). 

\bibitem{OtherUniversalGateSet}
E.~A~Martinez, T.~Monz, D.~Nigg, P.~Schindler and R.~Blatt, New J. Phys.~{\bf 18}, 063029 (2016).

\bibitem{Aaronson} 
S.~Aaronson, in \emph{Proceedings of the Royal Society of London A:
  Mathematical, Physical and Engineering Sciences~{\bf 461}}, p. 3473 (2005).


\end{thebibliography}
\end{document}